\tikzset
{
   ->-/.style={decoration={markings,mark=at position 0.5 with {\arrow{Straight Barb}}},
               postaction={decorate}}
}
\theoremstyle{plain}
\newtheorem{lem}{Lemma}[subsection]
\newtheorem{thrm}{Theorem}[subsection]
\theoremstyle{definition}
\newtheorem{defn}{Definition}
\theoremstyle{remark}
\newtheorem*{rmrk}{Remark}
\newcommand{\etal}{\textit{et al.}\/ }
\newcommand{\R}{\mathbb{R}} 
\newcommand{\Z}{\mathbb{Z}}
\newcommand{\C}{\mathbb{C}}
\newcommand{\F}{\mathbb{F}}
\newcommand{\Id}{\mathds{1}}
\newcommand{\simp}[1]{{( #1 )}}
\newcommand{\pair}[2]{\langle #1,#2 \rangle}
\newcommand{\Mod}[1]{\ (\mathrm{mod}\ #1)}
\newcommand{\argmin}[1]{\underset{#1}{\text{argmin}}}
\title{Counting topological interface modes using simplicial characteristic classes}
\author{N. Bohlsen, I. Y. Dodin, H. Qin}
\date{\today}
\begin{document}

\maketitle

\begin{abstract}
    A computational approach for predicting the number of topological interface modes (TIMs) in hermitian systems using the spectral flow - monopole (SFM) correspondence is presented. The number of TIMs is determined by calculating the Chern number of a complex line bundle of local polarisation vectors over a phase space sphere surrounding a Weyl point. 
    The Chern number is computed by constructing the simplicial first Chern class of a discrete vector bundle on a simplicial mesh. This approach is gauge invariant, derivative free, structure preserving, and robust to noise. The algorithm is shown to reproduce the expected number of TIMs for the case of equatorial fluid waves and the topological Langmuir cyclotron wave. The possibility of using this algorithm to analyse experimental measurements of bulk wave polarisations and predict the associated number of TIMs is explored in a synthetic example. 
\end{abstract}

\tableofcontents
\clearpage

\newpage
\section{Introduction}\label{Section:Introduction}

Topological interface modes (TIMs), which are modes localised to smooth interfaces between separate media, first attracted attention in condensed matter physics and were originally associated with quantum effects and crystal symmetries \cite{asboth2016short,Fidkowski_2011}. However, recent research has demonstrated that they can exist in classical media as well. Examples of TIMs have been reported in geophysics \cite{delplace2017topological,zhu2023topology,perez2025topology,PhysRevResearch.3.043002}, gyroviscous fluid dynamics \cite{PhysRevLett.122.128001}, stellar physics \cite{leclerc2024wave,leclerc2024exceptional}, mechanical systems \cite{PhysRevMaterials.2.124203,huber2016topological,nash2015topological}, plasma physics \cite{qin2023topological,fu2021topological,parker2020topological,fu2024topological}, molecular dynamics \cite{faure2001topological}, and optics \cite{han2024topological} This warrants the attention to the fundamental mathematical theory of TIMs, which is invariant to the physics of the underlying physical media. 

It is understood that, in a given system, the number of TIMs is determined by a topological invariant called Chern number $\text{Ch}_1$ (see Sec. 1.1 for details). Predicting their presence in the spectrum therefore reduces to calculating this Chern number to calculating $\text{Ch}_1$. However, toy models aside, finding the Chern number analytically is difficult. It would be beneficial then to develop a numerical algorithm for calculating $\text{Ch}_1$, especially from noisy simulation or experimental data.

Here, we propose such an algorithm based on the theory of discrete vector bundles and computational algebraic topology. We apply our algorithm to several previously studied  models and show that its predictions agree with the known results. We also demonstrate that our algorithm allows extracting Chern numbers from synthetic experimental data and is noise-tolerant. Notably, our results are not limited to TIM physics and can as well be applied in the context of other topological physics and general applied topology. Our implementation of the algorithms and numerical examples we present is available at the author's Github\cprotect\footnote{ At the URL: \verb!https://github.com/Bohlsen/DiscreteVectorBundles.jl!} as a \verb!Julia! package called \verb!DiscreteVectorBundles.jl!.

This paper is organised as follows. Section 1 reviews the connection between TIMs and the Chern number and then discusses its geometric and topological calculations. Section 2 reviews the elements of the theory of discrete vector bundles and characteristic classes, which we use for developing our algorithm. Section 3 describes and motivates our algorithm to compute the Chern number. Section 4 demonstrates the calculation on relevant examples from the theory of TIMs to show that our numerical scheme agrees with the existing analytic results. Section 5 explains how our method can extract Chern numbers from finite samples of experimental data and tests this on a synthetic example. Section 6 summarises our main results.

\subsection{Spectral flow - monopole correspondence}

For conservative linear systems any wave equation can be rewritten into a Schrodinger equation encoding all topological information into the Hamiltonian operator $\hat{H}$.It is understood that the number of TIMs is related to the topological properties of the Hamiltonian via a Bulk-Boundary (BB) correspondence which equates the number of TIMs to the topology of bulk waves in the media, or more precisely to the topology of the Weyl symbol of the Hamiltonian \cite{asboth2016short,faure2019manifestation,venaille2023ray,delplace2022berry}. However, the classical case is distinct from the condensed matter case due to the lack of global space periodicity in the Hamiltonian which implies that the Brillouin zone is not compact, therefore the standard BB correspondences from the condensed matter literature do not directly apply. In continuum systems without periodicity the relevant BB correspondence is the Spectral Flow - Monopole correspondence.

The Spectral Flow - Monopole (SFM) correspondence concerns solutions to differential equations which take the form of the time dependent Schr\"odinger equation
\begin{equation}
    i\partial_t \ket{\psi(\bm{x},t)} = \hat{H}(\mathbf{x},-i\grad)\ket{\psi(\bm{x},t)}\,,
\end{equation}

where\footnote{Note that throughout this paper we reserve bold symbols for vectors in $\R^3$.} $\mathbf{x}\in \R^3$, $\ket{\psi(\mathbf{x},t)}\in \C^k$ is a $k$-dimensional state vector of physical fields, and $\hat{H}(\mathbf{x},-i\grad)$ is a self-adjoint differential operator. The operator-symbol correspondence assigns to each differential operator, a function on phase space called its Weyl symbol. We notate the correspondence as
\begin{equation}
    \hat{H}(\mathbf{x},-i\grad) \longleftrightarrow H(\mathbf{x},\mathbf{k})\,.
\end{equation}
For detailed discussions of the relevant mathematics we refer to excellent monograph by Folland \cite{folland2016harmonic} or for a more accessible introduction to Tracy \etal \cite{tracy2014ray} and Venaille \etal \cite{venaille2023ray}. 

Specifically, we are interested in eigenmodes, that is with time dependence of $e^{-i\omega t}$, so that the resulting PDE is 

\begin{equation}\label{TISE}
    \omega\ket{\psi} = \hat{H}\ket{\psi}\,.
\end{equation}
Physically relevant equations of the form \eqref{TISE} include eigenstates of quantum systems and standing waves in fluids and plasmas with no dissipation, such as the TIMs discussed above. 

Consider the case where $\hat{H}$ depends upon a free parameter $\lambda$, which we call the \textit{spectral-flow} parameter. In the case of TIMs, the relevant $\lambda$ is the component of the wavevector directed along the boundary, but spectral-flow parameters can also emerge for other reasons. For example, in semiclassical molecular dynamics, the relevant $\lambda$ can be the total rotational angular momentum of a molecule \cite{faure2001topological,faure2019manifestation}. 

Assuming that the spectrum is discrete, we can define a set of curves which specify how the spectrum changes with $\lambda$. An example of this is shown as Figure \ref{fig:DiracSpectrum} which presents the spectrum of the operator\footnote{This Hamiltonian models the 2D Dirac equation with space varying mass. Specifically, it is the local model for the boundary between two regions in which the mass changes sign.}
\begin{equation}\label{DiracHam}
    \hat{H} = \begin{pmatrix}
        x & -i\partial_x -i\lambda\\
        -i\partial_x+i\lambda & -x
    \end{pmatrix}\,.
\end{equation}

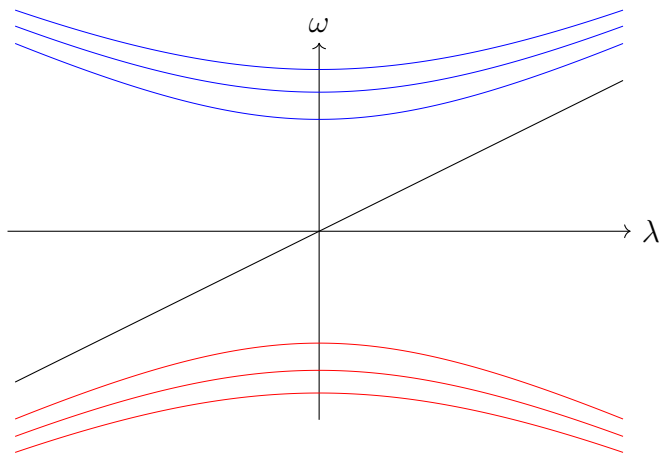
\begin{wrapfigure}{r}{0.5\textwidth}
    \begin{tikzpicture}[domain=-4:4,samples=100,scale = 1]

    \draw[->] (-4.1,0) -- (4.1,0) node[right] {$\lambda$};
    \draw[->] (0,-2.5) -- (0,2.5) node[above] {$\omega$};

    \foreach \n in {1,...,3}{
    \draw[very thin,color=blue] plot(\x,{sqrt{(1.2*\n+1+\x/2*\x/2)}});
    }
    \draw[color=black] plot(\x,{\x/2}) node[below right,font=\small] {};

    \foreach \n in {1,...,3}{\draw[very thin,color=red] plot(\x,{-sqrt{(1.2*\n+1+\x/2*\x/2)}});}
\end{tikzpicture}
    \caption{Spectrum of the 2D Dirac equation with linear mass profile. There is a positive frequency band (blue) a negative frequency band (red) and a spectrally flowing (black) mode which \textit{crosses the gap} and is topologically protected.}
    \label{fig:DiracSpectrum}
\end{wrapfigure}

This spectrum contains a positive frequency and negative frequency \textit{band} separated by a \textit{band gap}. However, there is also a distinguished mode which transits between the two bands as $\lambda$ is varied. This transition is called a \textit{spectral-flow} and for the case of topological boundary modes, one can think of such modes as surface wave which can only propagate in one direction along the boundary. What is notable about these spectrally flowing modes is that they are topologically protected. This means that even reasonably large perturbations to the operator $\hat{H}$ do not destroy the existence of a curve transiting between the two branches. This suggests that the existence of spectral-flow is related to topological properties of $\hat{H}$.

As it turns out \cite{faure2019manifestation,delplace2022berry}, the number of spectrally flowing modes depends upon the existence of degenerate points in the Weyl symbol of $\hat{H}$. More precisely, the SFM correspondence states that the net spectral flow, $\Delta \mathcal{N}$, (i.e. the number of modes crossing the gap as $\lambda$ varies from $-\infty$ to $\infty$) is given by 
\begin{equation}\label{SFMCorrespondence}
    \Delta \mathcal{N} = -\text{Ch}_1\,,
\end{equation}
where $\text{Ch}_1$ refers to the first Chern number of the complex line bundle of $H$ eigenvectors\footnote{We will refer to the $H$ eigenvectors as the \textit{local polarisation} vectors.} over a sphere surrounding a degeneracy point of $H$. Note that the above is only applicable for systems with one inhomogenous dimension and the relevant BB correspondence is more complicated when more dimensions are involved \cite{iwai2023bulk}.

\begin{rmrk}
    To clarify the above, focus on the case of \eqref{DiracHam} and note that $H$ will be a $2\times 2$ matrix valued function. Each band in the spectrum of $\hat{H}$ is associated to a solution \textit{branch} of the local dispersion relation of $H$, that is of $\det (H-\omega) = 0$. This implicitly defines two functions $\omega_\pm(x,k,\lambda)$ which are non-negative and non-positive respectively. Each branch determines a complex line bundle over the space $(x,k,\lambda)$ with fibers taken as the vector space of solutions to $H(x,k,\lambda)\chi^\pm =\omega^\pm(x,k,\lambda)\chi^\pm$. At each $(x,k,\lambda)$ we can choose a normalised local polarisation vector $\chi^\pm(x,k,\lambda)\in\C^2$ which forms a basis\footnote{The choice of basis is not unique and the function $\chi^\pm(x,k,\lambda)$ will not be continuous if the bundle is nontrivial.} for the fiber over $(x,k,\lambda)$. 
    
    However, at any Weyl (degeneracy) point of $H$, that is where $\omega_+=\omega_-$, the local polarisation is undefined and hence the bundles do not exist there. However, we can restrict the bundles to a sphere surrounding the degeneracy point over which they are well defined. The $\text{Ch}_1$ in \eqref{SFMCorrespondence} specifically refers to the Chern number of the $\chi^+$ bundle over this sphere.
\end{rmrk}

\begin{rmrk}
    Generically, if we have a state $\ket{\psi}$ containing $k$ fields then there will exist $k$ non-degenerate bands. We can label the bands by an index $n$ and notate their associated dispersion surfaces and polarisations by $\omega_n(x,k,\lambda)$ and $\chi_n$ respectively. 
    
    There may exist simultaneous spectral flows between any two bands separated by a band gap and this is predicted by a generalised SFM correspondence which states that the net number of modes that spectrally flow into the n'th band is 
    \begin{equation}\label{GeneralisedSFM}
        \Delta\mathcal{N}_n = -\text{Ch}_1^n\,,
    \end{equation}
    where $\text{Ch}_1^n$ refers to the Chern number of the line bundle generated by $\chi_n$ on a sphere surrounding a Weyl point.
\end{rmrk}

\subsection{Geometric calculation of the Chern number}

Within the topological physics literature, the first Chern number is often defined as the integral of the Berry curvature\footnote{Which is defined as $\mathcal{F}_n = d\mathcal{A}_n$, where $\mathcal{A}_n = \frac{i}{2\pi} \chi_n^\dag d\chi_n$ refers to the Berry connection associated to the $n$'th band.} $\text{Ch}_1 = \int \mathcal{F}_n $ \cite{delplace2022berry}. This definition is practical for simple analytic calculations since the eigenvectors $\chi_n$ can often be obtained exactly and hence the Berry curvature calculated and integrated. However, using this definition is inconvenient in numerical analysis because calculations of derivatives from noisy data will introduce substantial error. Therefore while $\text{Ch}_1$ should be exactly an integer the output of a numerical calculation will not be. Furthermore, when dealing with more geometrically complicated domains it is not trivial to construct a discrete approximation to the exterior derivative $d$. The source of the above problems is the adoption of a geometric definition of the Chern number but this is actually unnecessary as we can instead adopt a purely topological definition which is completely free of derivatives, as will be discussed in Section \ref{sec:TopologicalInfo}.

Other authors have investigated how to calculate the first Chern number of a complex line bundle relying upon the usual geometric definition. Here, we briefly review the existing approaches and discuss their limitations. 

The interpretation of the $\text{Ch}_1$ as the integral of the Berry curvature comes from the mathematics of gauge theory, which studies the geometry of connections on vector bundles \cite{bleecker2005gauge}. One approach to calculating $\text{Ch}_1$ then comes from constructing a discrete gauge theory. Due to research in Lattice Gauge Theory this is a relatively well understood problem. Fukui \etal developed a numerical method which can compute $\text{Ch}_1$ for complex line bundles over the Brillouin zone of a topological insulator \cite{fukui2005chern}. Their approach works by discretizing the Brillouin zone, which is topologically a torus $S^1\times S^1$, into a regular grid, and then constructing a discrete connection and curvature on that grid. This works particularly well for Brillouin zones because their geometry is known in advance. For the case of more general triangulations of base manifolds, or vector bundles derived from data the same approach is not simply applicable. Similar methods have been applied by several other authors \cite{zhao2020first,kudo2019many,wang2020universal,knoppel2016complex}. However, all of the discrete gauge theory methods suffer from the same issue as just directly integrating a numerical Berry curvature. Since $\text{Ch}_1$ must be an integer, we prefer approaches where the integer nature of $\text{Ch}_1$ is exactly enforced by the calculation. This can be achieved and has been investigated in the past. For example, one approach is to connect the Chern number to a relevant Bott index and then calculate that instead such as attempted by Lin \etal  \cite{toniolo2022bott,lin2023calculations}. However, this is really only practical in the case of topological insulators as it relies upon the toroidal topology of the Brillouin zone. Similar ideas have been explored by Tarnowski \etal who experimentally investigated the connection between the $\text{Ch}_1$ and an integer linking number of curves in a time dependent momentum space \cite{tarnowski2019measuring}. However, again, their approach is very specific to the case the Brillouin zone of a topological insulator. Finally, an alternative approach would be to compute $\text{Ch}_1$ via intersection theory, observing that it counts the number of signed intersections between a generic section of a line bundle and its zero section. This is a standard calculation analytically, for example see the calculation of Faure \cite{faure2001topological}, but to our current knowledge has not been implemented numerically as a general tool.

\subsection{Topological calculation of the Chern number}\label{sec:TopologicalInfo}

Suppose we want to count the expected number of TIMs for a system with $k$ fields and a Hamiltonian $\hat{H}$. Take that the Weyl symbol of $\hat{H}$ can be written as a $k\times k$ matrix $H(\bm{\lambda})$ where $\bm{\lambda}\in \R^3$ represents a general choice of coordinates for the 3D parameter space, of which one coordinate direction must be the spectral flow paramter. Also suppose that we have precalculated the dispersion surfaces and chosen a branch $\omega_n(\bm{\lambda})$ of which we will compute the Chern number\footnote{This is not strictly necessary but it is a useful simplifying assumption for now.}. 

We propose an algorithm for calculating $\text{Ch}_1$ using only a sample of local polarisation vectors taken from the vertexes of a triangular mesh surrounding a Weyl point. The mesh will need to be structured as a \textit{simplicial complex}. The following remark provides a brief description of simplicial complexes and establishes our notation.

\begin{rmrk}
    A simplicial complex $K$ is an assembly of simplexes (generalised triangles) of different dimensions. In a computational physics context we think of $K$ as a triangular mesh approximating our manifold \cite{edelsbrunner2010computational,ghrist2014elementary}. The notation we adopt to describe simplicial complexes is as follows.

    We indicate the subset\footnote{Note that this notation differs from the usual convention in topology where $K^p$ refers to the complete p-skeleton on $K$.} of simplices of dimension $p$ as $K^p\subset K$, so $K^0$ is then the set of vertices of $K$ which we will assume are finite and given in a explicit order. We notate each vertex (0-simplex) as $(i)\in K^0$ where $i$ is the index of the vertex. The higher simplices are each then labelled by their vertices. For example $(ij)$ is the edge (1-simplex) which connects $(i)$ to $(j)$, and $(ijk)$ is the triangle with vertices $(i),(j),(k)$. 
    
    The assumed ordering on higher simplexes is lexicographical, so $i<j<k$ for $(ijk)$, and this specifies the assumed orientation on each simplex. 
\end{rmrk}

Our proposed algorithm can be summarised into the following list of 8 steps.

\begin{enumerate}
    \item Choose a sample of $N$ points $\{\bm{\lambda}_i\}_{i= 1}^N$ from a sphere surrounding a Weyl point.
    \item Use these points as the vertices to generate a spherical triangular mesh $K$ and its boundary matrices\footnote{The definition of the boundary matrices is provided in Appendix \ref{AppendixA:TDA}.} $B_p$ for this mesh.
    \item At each vertex $(i)\in K^0$ solve $H(\bm{\lambda}_i)\chi_{(i)} = \omega_n(\bm{\lambda}_i)\chi_{(i)}$ for the local polarisation $\chi_{(i)}$. 
    
    \item Make an array $\Phi = \{\Phi_{(i)}\}_{(i)\in K^0}$ of $\R$ valued $k\times 2$ projection matrices defined as $\Phi_{(i)} = [\gamma(\chi_{(i)}),\gamma(J\chi_{(i)})]$ where $J$ refers to multiplication by the complex unit and $\gamma$ is a linear map which is defined in Section \ref{sec:FirstChernClass}.

    \item For each edge $(ij)\in K^1$ calculate the SVD\footnote{Singular Value Decomposition} of the matrix $\Phi_{(i)}^\intercal\Phi_{(j)} = U_{(ij)}S_{(ij)}V_{(ij)}^\intercal$. Then construct an array of angles $\Theta = \{\Theta_{(ij)}\}_{(ij)\in K^1}$ defined by $\Theta_{(ij)} = \arctan\left(\frac{[U_{(ij)}V_{(ij)}^T]_{21}}{[U_{(ij)}V_{(ij)}^T]_{11}}\right)$.
    
    \item Construct a $\Z$ valued array $e = \{e_{(ijk)}\}_{(ijk)\in K^2}$ defined by $e_{(ijk)} = \text{round}\left(\frac{\Theta_{(ij)}-\Theta_{(ik)}+\Theta_{(jk)}}{2\pi}\right)$.
    \item Solve $B_2\mu=0$ for a $\Z$ valued null vector\footnote{There is an important subtlety here since the equation $B_2\mu = 0$ does not have unique solutions. Determining the correct solution will be discussed in \ref{sec:OrientingK}.}. Store it in an array $\mu = \{\mu_{(ijk)}\}_{(ijk)\in K^2}$.
    \item Compute the sum $\text{Ch}_1^n = \sum_{(ijk)\in K^2} \mu_{(ijk)}e_{(ijk)}$.
\end{enumerate}

The $\text{Ch}_1^n$ output by the above will be the Chern number\footnote{Guaranteeing that our algorithm correctly computes the Chern number requires some assumptions on both the geometry of the sample of points and on the Lipshitz continuity of the underlying vector bundle. This will be discussed in Section \ref{Section:EulerClass}.} of the polarisation bundle for the $\omega_n$ branch and hence the net number of TIMs which enter the $n$'th band the wavevector along the interface is varied. To readers who are used to the geometric approach to calculating $\text{Ch}_1$ it may seem somewhat strange that the above algorithm would compute it at all. Explaining \textit{why} this algorithm calculates $\text{Ch}_1$ requires a discussion of the topological theory of \textit{simplicial characteristic classes}.

\subsubsection{Topological motivation}

The topological definition of a Chern number follows from the theory of characteristic classes of vector bundles. If $E^m\rightarrow M$ is a rank-$m$ vector bundle over a manifold $M$, then its characteristic classes are an assignment, to the bundle, of cohomology classes of $M$ which, in some sense, measure a property of the bundle. Characteristic classes are topological invariants, that is, they depend only upon the topological character of the vector bundle and not on any geometric structure. This implies that they can be defined upon more abstract mathematical spaces than just manifolds, and specifically can be constructed on simplicial complexes, on which their calculation is both purely combinatorial and robust to noise.  

\begin{rmrk}
       In the above are implicitly referring to the \textit{simplicial cohomology} instead of the more common, in physics at least, \textit{singular} or \textit{De Rham} cohomologies. For readers who are not familiar with simplicial (co)homology, a brief review is presented in Appendix \ref{AppendixA:TDA} which also defines our chosen notation for simplicial cohomology. 
\end{rmrk}

The primary characteristic class we need in order to define $\text{Ch}_1$ is the first Chern class $[c_1]\in H^2(M,\Z)$. Later, in Section \ref{Section:EulerClass}, we will also need the first Stiefel--Whitney class $[W_1] \in H^1(M,\Z_2)$, and the Euler class $[e]\in H^{m}(M,\Z)$. Each of these measures a formally different property of the bundle: $[W_1]$ measures the orientability of the bundle, $[e]$ measures whether there exists a nowhere vanishing section, and $[c_1]$ measures whether it is possible to assign a consistent phase throughout a complex vector bundle \cite{milnor1974characteristic}. Full definitions of these classes will be provided in Section \ref{subsec:SWAndEuler}.

The core piece of homology theory we need to define the Chern number is the construction of the \textit{fundamental class} of a manifold. Suppose $M$ is a connected, compact, orientable, $n$-dimensional manifold, then it turns out that the top dimensional homology group $H_n(M,\Z)$ must be isomorphic to the integers, that is $H_n(M,\Z) = \Z$ \cite{Hatcher}. Establishing an orientation on $M$ amounts to choosing a generator for $H_n(M,\Z)$, by which we mean finding a cycle $\mu_M\in \ker B_n$ such that $H_n(M,\Z) = \{ \alpha[\mu_M],\alpha\in \Z\}$. $[\mu_M]\in H_n(M,\Z)$ is called the \textit{fundamental class} of $M$. Note that there are two possible orientations we could choose for $M$ and hence two valid choices for the fundamental class. Determining the geometrically correct fundamental class will be discussed in Section \ref{sec:OrientingK}.

Equipped with $[c_1]$ and $[\mu_M]$ we can now provide a topological definition for the Chern number of a vector bundle over a manifold. For the purposes of counting TIMs we only need 2-sphere, $S^2$, base manifolds and so we will restrict to the case where $M$ is a $2$-dimensional surface.
\begin{defn}
    Let $E^m\rightarrow M$ be rank-$m$ complex vector bundle over a base manifold $M$ which is connected, compact, orientable and $2$-dimensional. The first Chern number $\text{Ch}_1$ is the result of pairing $[c_1]\in H^2(M,\Z)$ with $[\mu_M] \in H_2(M,\Z)$, the fundamental class of $M$. 
    \begin{equation}\label{TopoC1}
        \text{Ch}_1 = \pair{[c_1]}{[\mu_M]}\,.
    \end{equation}
\end{defn}

\begin{rmrk}
    Note that this topological definition for $\text{Ch}_1$ agrees with the geometric definition constructed using Berry curvature, or more generally Chern-Weil theory \cite{milnor1974characteristic,nakahara2018geometry}.
\end{rmrk}

\subsubsection{General scheme}

The core idea behind our algorithm to calculate $\text{Ch}_1$ is to translate \eqref{TopoC1} onto a simplicial triangulation $K$ of $M$. Then, using this triangulation, $\mu_M$ and $c_1$ can be represented by vectors in the simplicial chain and cochain basis for which calculating the pairing $\pair{[\mu_M]}{[c_1]}$ just requires taking a dot product. This calculation is necessarily derivative free, purely $\Z$ valued, and robust to noise, since it depends only on the homology and cohomology class of $\mu_M$ and $c_1$ which are robust properties of the associated chain and cochain. 

Therefore, at a topological level our algorithm can be described as follows:
\begin{enumerate}
    \item Generate a simplicial triangulation $K$ of the base manifold $M$.
    \item Construct the complex line bundle of local polarisation vectors over $K$.
    \item Obtain a simplicial cochain $c_1\in C^2(K,\Z)$ representing the first Chern class $[c_1]\in H^2(K,\Z)$ of the bundle.
    \item Obtain a simplicial chain $\mu_K \in C_2(K,\Z)$ representing the oriented fundamental class of $K$ $[\mu_K]\in H_2(K,\Z)$.
    \item Calculate the pairing $\text{Ch}_1 = \pair{[c_1]}{[\mu_K]} = c_1^T\mu_K$.
\end{enumerate}

Generating the simplicial triangulation $K$ and calculating $\mu_K$ are both relatively simple tasks with well developed theory and existing algorithms to choose from \cite{edelsbrunner2010computational}. However, constructing a vector bundle over a simplicial complex and computing $c_1$ are much less well developed problems with little existing theory to apply. The main obstruction is that the usual definition of a vector bundle, familiar to most people who have taken a class on differential geometry, is tailored to bundles over manifolds and is challenging to adapt to simplicial complexes since they are discrete spaces rather than continuous ones. The solve this we can define the related notion of an \textit{approximate discrete vector bundle} over $K$ and their associated simplicial characteristic classes. In Section \ref{Section:DiscreteVectorBundles} we will discuss discrete vector bundles and the associated characteristic classes in detail.

\subsubsection{Existing literature on simplicial characteristic classes}\label{sec:TopoLitReview}

Separately to the topological physics research program, there has been an ongoing research program in computational geometry and topology which has looked at computing characteristic classes of vector bundles more generally and developed algorithms to do so. Tinarrage presented a general scheme for the calculation of the Stiefel--Whitney classes of real vector bundles \cite{tinarrage2022computing}. In principle, their method could be extended to calculate the $c_1$ class we need to obtain but Tinarrage's approach requires constructing explicit triangulations of Grassmanians which they determined was computationally intractable. A related approach was developed independently by Ren \cite{ren2025persistent}. The work of Tinarrage and Ren has been recently extended by Gang \cite{gang2025persistent} who presented a method which practically calculates higher Stiefel--Whitney classes, for datasets which approximate smooth manifolds, using Wu's identities \cite{milnor1974characteristic}. Finally, and for our purposes here most relevantly, Scocolla and Perea \cite{scoccola} presented a method for the computation of Euler classes of a rank-2 real vector bundle using discrete \v{C}ech cocycles. Their method works well with finite samples of vector bundles with the base space embedded in $\R^N$, and hence is applicable for counting TIMs. The algorithm we present in Section \ref{Section:EulerClass} to calculate $c_1$ is essentially a direct application of the methods of Scoccola and Perea combined with the observation that the first Chern class of a complex line bundle equals the Euler class of its realisation. Therefore, by transforming a finite sample of a complex line bundle into a finite sample of a real rank 2 vector bundle we can apply Scoccola and Perea's algorithm to calculate the Euler class and obtain $c_1$ as a result.

\subsection{Notational conventions}

To simplify later discussion we will briefly discuss some notational choices we adopt. 

Consider the space of orthonormal bases of $k$ dimensional planes in $\R^N$. We can describe refer to this space as $V(k,N;\R)$ and represent it by $N\times k$ real matrices with orthonormal columns. This space comes naturally embedded in $V(k;\R)$, the space of $\infty \times k $ matrices with orthonormal columns. We refer to these two spaces generically as the \textit{Stiefel matrices}. When referring to the Stiefel matrices we will usually assume real entries, but the complex Stiefel matrices will also be required. When there is a chance of confusion the real and complex cases are notated as $V(k;\R)$ and $V(k;\C)$ respectively. Also, we adopt the standard $O(k)$ and $SO(k)$ notation for the spaces of orthogonal and special orthogonal $k\times k$ matrices. 

We will regularly attach numerical data, such as matrices, to the vertices, edges, and triangles of a simplicial mesh $K$. To indicate this we use the simplex label as a subscript, as was the case in Section \ref{sec:TopologicalInfo}. For example, if we have a collection of vectors $\chi$ each of which is assigned to an edge then we will notate each element as $\chi_{\simp{ij}}$ which refers to the vector attached to the specific edge $\simp{ij}$. Likewise $\Gamma_{\simp{lmn}}$ refers to a mathematical object of some kind attached to the triangle $\simp{lmn}$. This convention is adopted to distinguish objects attached to cells from expressions like $[M]_{ij}$ which refers to the row $i$, column $j$ element of a matrix $M$.

The following are mathematical conventions, rather than notational conventions we adopt: All spaces are assumed to be topological spaces, usually manifolds. All base manifolds of vector bundles are assumed to be compact, and all other spaces are assumed to be paracompact. All vector spaces are assumed to be Euclidean, equipped with an inner product, and hence all bases can be assumed given to us as orthonormal bases since any that are not orthonormal can be made so by the Gram-Schmidt procedure.

\section{Approximate and discrete vector bundles}\label{Section:DiscreteVectorBundles}

In this section we discuss the relevant mathematical background needed to explain the algorithm for counting TIMs which was presented in Section \ref{sec:TopologicalInfo}. We will begin by explaining how we represent a vector bundle which is obtained from numerical or experimental data. We then develop the notion of an approximate discrete vector bundle over a simplicial complex. Finally, we will define the Stiefel--Whitney, Euler, and Chern classes and explain the relationship between them.

\subsection{Representing vector bundles obtained from data}\label{subsec:vectorbund}

Traditionally, a rank-$m$ vector bundle $E^m\rightarrow M$ is specified by a projection map $\pi$, an open cover\footnote{Here $I$ refers to an index set which we leave unspecified.} $\mathcal{U} = \{U_i\}_{i\in I}$ of $M$, and a homeomorphism $\phi_i:\pi^{-1}(U_i)\rightarrow U_i\times \R^m$ for each open set $U_i$. However, when dealing with vector bundles obtained from numerical or experimental data this definition is not natural. The problem is that the fibers $\pi^{-1}(p)$ are formally abstract and do not come embedded in an ambient space. Most processes which assign vector spaces locally will specify them as subspaces of a given ambient vector space, usually with a given basis. For example, for the local polarisation bundles of a Weyl symbol $H$, the fibers are embedded in an ambient copy of $\C^k$. 
Conveniently, it is possible to take any vector bundle and transform it into a description in terms of framed fibers embedded in an ambient space. To do this we need the following lemma.
\begin{lem}[Lemma 5.3 \cite{milnor1974characteristic}]\label{lem:Milnor5.3}
    For any vector bundle over a paracompact base space $E^m\rightarrow M$ there exists a map $f:E^m\rightarrow \R^N$, and an $N$ large enough, such that on fibers $f$ is a linear injection.
\end{lem}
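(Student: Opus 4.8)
The plan is to build $f$ by patching together the fibrewise-linear projections supplied by the local trivialisations, gluing them into a single global map with a partition of unity. First I would invoke paracompactness directly: choose an open cover $\{U_\alpha\}$ of $M$ over each member of which $E^m$ is trivial (such a cover exists by the definition of a vector bundle), pass to a locally finite refinement, and take a subordinate partition of unity $\{\lambda_\alpha\}$. On each $U_\alpha$ the trivialisation $\phi_\alpha:\pi^{-1}(U_\alpha)\to U_\alpha\times\R^m$ composed with projection onto the $\R^m$ factor gives a continuous map $g_\alpha:\pi^{-1}(U_\alpha)\to\R^m$ that restricts to a linear isomorphism on every fibre.

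Next I would globalise each $g_\alpha$. Define $h_\alpha:E^m\to\R^m$ by $h_\alpha(v)=\lambda_\alpha(\pi(v))\,g_\alpha(v)$ on $\pi^{-1}(U_\alpha)$ and $h_\alpha(v)=0$ elsewhere; since $\lambda_\alpha$ vanishes outside a closed set contained in $U_\alpha$, this is well defined and continuous, and it is linear on each fibre. Assembling the $h_\alpha$ into a single map $f=(h_\alpha)_\alpha$ landing in a product of copies of $\R^m$ yields the candidate. To check that $f$ is a linear injection on fibres, fix $p\in M$: because $\sum_\alpha \lambda_\alpha(p)=1$, some index $\alpha_0$ has $\lambda_{\alpha_0}(p)>0$, and then the restriction of $h_{\alpha_0}$ to $\pi^{-1}(p)$ equals $\lambda_{\alpha_0}(p)$ times a linear isomorphism, hence is injective; a fortiori $f$ is injective on $\pi^{-1}(p)$, and linearity on fibres is clear by construction.

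The main obstacle is keeping the target dimension $N$ finite, since the product indexed by $\{U_\alpha\}$ is a priori infinite-dimensional. When $M$ is compact — the setting relevant here, where the base is a $2$-sphere surrounding a Weyl point — this difficulty disappears: extract a finite subcover $U_1,\dots,U_k$, and the construction produces an honest map into $\R^N$ with $N=mk$. For a general paracompact base one works a little harder: either allow the countable target $\R^\infty=\bigcup_N \R^N$ and note that each fibre meets only finitely many nonzero components, or, when $M$ has finite covering dimension $n$, regroup the locally finite cover into $n+1$ subfamilies of pairwise-disjoint sets so that the $g_\alpha$ within each subfamily can be added without collision, giving a finite $N=m(n+1)$. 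I would present the compact case in full and remark on the regrouping step to recover the stated paracompact generality.
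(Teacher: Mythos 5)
Your proof is correct and is essentially the standard partition-of-unity argument from the cited source (Milnor--Stasheff, Lemma 5.3): the paper itself offers no proof of this lemma, deferring entirely to that reference, and your construction of $f=(h_\alpha)_\alpha$ with $h_\alpha(v)=\lambda_\alpha(\pi(v))\,g_\alpha(v)$ reproduces it faithfully. Your added remarks on keeping $N$ finite --- the finite subcover in the compact case relevant to the $S^2$ base, and the covering-dimension regrouping or the passage to $\R^\infty$ in general --- are accurate and consistent with how the paper later works with $V(k;\R)$ as $\infty\times k$ Stiefel matrices.
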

Equipped with the Lemma \ref{lem:Milnor5.3}, we consider the maps $f\circ\phi_i^{-1}:U_i\times \R^m\rightarrow \R^N$. Fixing the standard basis on $\R^m$ and $\R^N$ and noting that $\phi_i$ and $f$ are both linear and injective on fibers we can represent this map by a matrix, with column rank-$m$, which varies over $U_i$. Performing the Gram--Schmidt process allows us to ensure this matrix has orthonormal columns and so we can always represent $f\circ\phi_i^{-1}$ by a map $\Phi_i:U_i\rightarrow V(m,N;\R)\subset V(m;\R)$. This map assigns to each point in $U_i$ an orthonormal basis, the columns of $\Phi_i$, whose span represents the fiber embedded in $\R^N$. We refer to this basis as a \textit{frame} over $U_i$ and we will refer to the collection $\Phi$ as a \textit{local trivialisation} of the vector bundle\footnote{This is somewhat non-canonical nomenclature. }. 

The construction above assigns to each vector bundle a local trivialisation $\Phi$ but not completely freely. The vector bundle structure non-trivially constrains how the maps behave on overlaps in $\mathcal{U}$. Consider the overlap of two open sets $U_i\cap U_j$, on which it follows that $\Phi_i$ and $\Phi_j$ are related by 
\begin{equation}\label{Witness}
    \Phi_j = [f\circ \phi_j^{-1}] = [f\circ \phi_i^{-1}\circ \phi_i\circ \phi_j^{-1}] = [f\circ \phi_i^{-1}][\phi_i\circ \phi_j^{-1}]  = \Phi_i\Omega_{ij}\,,
\end{equation}
where $[\cdot]$ indicates taking a matrix representation of the map and $\Omega_{ij}:U_i\cap U_j\rightarrow O(m)$ is the matrix representation of a transition function $\phi_i\circ\phi_j^{-1}$. 

The set of maps $\Omega_{ij}$ is called a \textit{witness} for the local trivialisation $\Phi$. Using this concept of a witness we can construct a computationally practical definition of a vector bundle. 

\begin{defn}[Local Trivialisation]\label{def:Local Trivialisation}
    Given an open cover $\mathcal{U}$ of $M$, a local trivialisation of a rank-$m$ vector bundle is an assignment of a map $\Phi_i:U_i\rightarrow V(k;\R)$ for each $U_i\in \mathcal{U}$, such that for each non-empty intersection $U_i\cap U_j$ there exists a map $\Omega_{ij}:
    U_i\cap U_j\rightarrow O(m)$ for which 
    \begin{equation}\label{WitnessCondition}
        \Phi_i \Omega_{ij}=\Phi_j\,.
    \end{equation}
\end{defn}

\begin{rmrk}
    Definition \ref{def:Local Trivialisation} is equivalent to the standard definition of a vector bundle as it is possible to use the maps $\Phi$ to construct a unique bundle in the traditional sense \cite{lee2003smooth}. 
\end{rmrk}

\begin{rmrk}
    As an example of how such a local trivialisation is obtained from data, consider our TIM counting problem from above. Recall that we seek to solve the equations $H(\bm{\lambda})\chi(\bm{\lambda}) = \omega(\bm{\lambda})\chi(\bm{\lambda})$ and $\chi(\bm{\lambda})^\dag \chi(\bm{\lambda})=1$ for the normalised local polarisation vectors $\chi\in \C^k$ over all the points $\bm{\lambda}\in S^2$ surrounding a Weyl point. A smooth global solution will not necessarily exist but local solutions always do. That is, around each point $\bm{\lambda}$ there exists an open set $U_{\bm{\lambda}}\subset S^2$ and a local function $\chi_{\bm{\lambda}}:U_{\bm{\lambda}}\rightarrow \C^k$ such that $H(\bm{\lambda}')\chi_{\bm{\lambda}}(\bm{\lambda}') = \omega(\bm{\lambda}')\chi_{\bm{\lambda}}(\bm{\lambda}')$ for all $\bm{\lambda}'\in U_{\bm{\lambda}}$. We therefore have an infinite open cover $\{U_{\bm{\lambda}}\}_{\bm{\lambda}\in S^2}$ which we can restrict to some finite subcover $\mathcal{U} =\{U_i:= U_{\bm{\lambda_i}}\}_{i\in I}$. To each set $U_i$, we can attach a Stiefel matrix valued map $\Phi_i: U_i \rightarrow V(2,k;\R)\subset V(2;\R)$ given by $\Phi_i(\bm{\lambda}')= [\gamma(\chi_{\bm{\lambda}_i}(\bm{\lambda}')),\gamma(J\chi_{\bm{\lambda}_i}(\bm{\lambda}'))]$, where the $\gamma$ and $J$ maps are defined in Section \ref{sec:FirstChernClass}. The open cover $\mathcal{U}$ of $S^2$ and set of maps $\Phi = \{\Phi_i\}$ constitute the data for a local trivialisation of the rank-$2$ local polarisation bundle over $S^2$.

    Note that obtaining the witness $\Omega$ associated to this trivialisation is a straightforward exercise after noting that on any overlap $U_i\cap U_j$ the functions $\chi_{\bm{\lambda}_i}$ and $\chi_{\bm{\lambda}_j}$ must agree up to a complex phase factor, since the local polarisations are unique up to this phase. Therefore, there exists smooth functions $\theta_{ij}:U_i\cap U_j \rightarrow \R$ such that $e^{i\theta_{ij}}\chi_{\bm{\lambda}_i} = \chi_{\bm{\lambda}_j}$ from which the $\Omega_{ij}\in O(2)$ matrices can be constructed.
\end{rmrk}

In section \ref{Section:EulerClass}, we will need to deal with the problem of orientation on vector bundles. Choosing a frame $\Phi_i$ over each $U_i$ specifies an implicit order on the basis vectors of each fiber by the ordering of the columns of $\Phi_i$. This amounts to a choice of local orientation on the vector bundle. A \textit{global orientation} is then a choice of local trivialisation for which the frames are consistently oriented. We can formalise this idea as Definition \ref{def:orientation}
\begin{defn}\label{def:orientation}
    A vector bundle specified by a local trivialisation $\Phi$ is oriented if the transition functions $\Omega$ are orientation preserving. That is $\det(\Omega_{ij}) >0$ for all $U_i\cap U_j \neq 0$. It is orientable if we can choose an equivalent local trivialisation such that the above holds.
\end{defn}

\subsection{Approximate vector bundles}\label{subsec:approxvectorbund}

If we want to study vector bundles numerically, we will eventually need to deal with the problem of redundancy to noise. Suppose we perform some set of measurements which produce the maps $\Phi$ assigned to our open sets. Then for any point in an overlap $p\in U_i\cap U_j$ we would expect that $\Im\Phi_i(p) = \Im\Phi_j(p)$ since these maps just describe different bases over the same fiber. Any amount of noise will cause this result to fail slightly and result in $\Im\Phi_i(p)$ and $\Im\Phi_j(p)$ referring to slightly different hyperplanes. As a result we should not expect to satisfy \eqref{Witness} exactly, but we may satisfy it approximately. That is, while there may not exist an $\Omega_{ij}$ such that \eqref{Witness} holds, we may have an $\Omega_{ij}$ for which $\Phi_i\Omega_{ij} \approx \Phi_j$. Placing a norm on the matrix subspace $V(k;\R)\subset \R^{\infty\times k}$ we can define an \textit{approximate vector bundle} by requiring that this approximation be uniformly good across the whole cover. 
\begin{defn}[Approximate Local Trivialisation]\label{def:Approximate Local Trivialisation}
    Given $\epsilon>0$, an $\epsilon$-approximate local trivialisation consists of identical data to Definition \ref{def:Local Trivialisation} but with the condition \eqref{WitnessCondition} weakened to
    \begin{equation}
        ||\Phi_i \Omega_{ij}-\Phi_j||<\epsilon\,.
    \end{equation}
\end{defn}

\begin{rmrk}
Definition \ref{def:Approximate Local Trivialisation} was introduced by Scoccola and Perea \cite{scoccola}. They extended a result of Tinarrage \cite{tinarrage2022computing} to show that as long as $\epsilon$ is sufficiently small an approximate vector bundle projects onto a true vector bundle over the base space and that this association is stable with respect to perturbation of the approximate local trivialisation $\Phi$. The exact constraints on $\epsilon$ are discussed in their original paper and we refer interested readers there for details. 
\end{rmrk}

The above discussion focused on the local trivialisations $\Phi$ of a vector bundle. The transition maps $\Omega$ only play the role of witnesses under this interpretation of the theory. There exists a related alternative theory, which we call the \v{C}ech theory, in which a vector bundle can be entirely specified by its transition maps. Considering a non-empty overlap between three open sets $U_i\cap U_j\cap U_k$, performing a similar calculation to \eqref{Witness} and using the fact that each $\Phi_i$ is an isomorphism onto its image yields 
\begin{equation}\label{CocycleCondition}
    \Omega_{ik} = \Omega_{ij}\Omega_{jk}\,.
\end{equation}
This is known as the \textit{cocycle condition} and it is the core constraint on the transition functions of a vector bundle. A set of maps $\Omega_{ij}:U_i\cap  U_j\rightarrow O(k)$ satisfying \eqref{CocycleCondition} exactly specifies a vector bundle\footnote{This result is sometimes called the vector bundle reconstruction theorem \cite{lee2003smooth}.}. 

It will be helpful later to introduce some nomenclature related to the \v{C}ech theory of vector bundles. An assignment of maps $\Theta_i:U_i\rightarrow O(k)$ for each $U_i$ in the open cover is referred to as a \v{C}ech 0-cochain, the space of which is a group we notate as $\check{C}^0(\mathcal{U},O(k))$. If we assign maps to the overlaps $\Omega_{ij}:U_i\cap  U_j\rightarrow O(k)$ we call this a \v{C}ech 1-cochain, the space of which we refer to as $\check{C}^1(\mathcal{U},O(k))$, and further if $\Omega \in \check{C}^1(\mathcal{U},O(k))$ satisfies \eqref{CocycleCondition} then $\Omega$ will be called a \v{C}ech 1-cocycle \cite{nakahara2018geometry}. Note that the orthogonal group $O(k)$ here can be replaced by any group, although we will only need $O(k)$ and $SO(k)$ as they are the structure groups relevant to our vector bundles.

\begin{rmrk}
    Note that, the witness $\Omega$ associated to an approximate local trivialisation $\Phi$ will not exactly satisfy \eqref{CocycleCondition} but instead will satisfy the weaker constraint $||\Omega_{ik}-\Omega_{ij}\Omega_{jk}||<3\epsilon$ and so will constitute only an approximate \v{C}ech cocycle. The factor of $3$ here follows from repeated application of the triangle inequality and is proven as Lemma 4.9 in \cite{scoccola}.
\end{rmrk}

\subsection{Discrete vector bundles}

The above discussion of approximate vector bundles was entirely continuous and is not directly implementable as a discrete numerical theory. This is a common issue in applied topology and is solved by observing that most of the information given in $\mathcal{U}$ and the local trivialisation $\Phi$ is redundant, in the sense that we can reduce it to a purely combinatorial description of the bundle from which we can reconstruct our original continuous bundle. 
Given an open cover $\mathcal{U} = \{U_i\}_{i\in I}$ we can construct a simplicial complex, called the \textit{Nerve} of the cover, $\mathcal{N}(\mathcal{U})$, which encodes the local overlaps between the open sets in the cover \cite{ghrist2014elementary}. $\mathcal{N}(\mathcal{U})$ is an abstract simplicial complex and is constructed as follows: include all singleton elements $(i)\in I$, then all pairs (edges) $(ij)$ for which $U_i\cap U_j\neq 0$, then all triplets (triangles) $(ijk)$ for which $U_i\cap U_j\cap U_k\neq 0$ and so on. As an example, Figure \ref{fig:Nerve} shows a collection of four open sets and its nerve.

\begin{figure}
    \centering
    \begin{tikzpicture}[scale = 0.8]
    \draw[smooth cycle,tension=1] plot coordinates{(-4,0) (-2,2) (0.5,-0.5) (-2,1)};
    \coordinate (Ui) at (-2,1);
    \node[above right] (i) at (Ui) {$U_i$};

    \draw[smooth cycle,tension=1] plot coordinates{(-5,-0) (-3,0.5) (-1,-1.5) (-4,-2)};
    \coordinate (Uj) at (-4,-1);
    \node[right] (j) at (Uj) {$U_j$};

     \draw[smooth cycle,tension=1] plot coordinates{(-2,-1) (0,0.5) (1,0) (0,-1) (-0.5,-1.5) (-1.5, -2)};
    \coordinate (Uk) at (0,-1);
    \node[above left] (k) at (Uk) {$U_k$};

    \draw[smooth cycle,tension=1] plot coordinates{(-4,-3) (-2,-1.5) (-1,-2) (0,-3.5)};
    \coordinate (Ul) at (-1,-3);
    \node (l) at (Ul) {$U_l$};

    \path[->,black,dashed,thick] (2,-1) edge (4,-1);

    \draw[thick,black]  (6.5,1) -- (7.5,-1) -- (5.5,-1) -- (6.5,1);
    \filldraw[thick, black, fill = gray!30] (5.5,-1) -- (6.5,-3) -- (7.5,-1) -- (5.5,-1);
            
    \node[above] (i2) at (6.5,1) {$\simp{i}$};
    \node[right] (l2) at (7.5,-1) {$\simp{k}$};
    \node[left] (j2) at (5.5,-1) {$\simp{j}$};
    \node[below] (l2) at (6.5,-3) {$\simp{l}$};
    
    \end{tikzpicture}
    \caption{Diagram of a cover $\mathcal{U}$ and its nerve $\mathcal{N}(\mathcal{U})$. The cell $\simp{ijk}$ is not included in the complex, but $\simp{jkl}$ is, as indicated by the shading.}
    \label{fig:Nerve}
\end{figure}
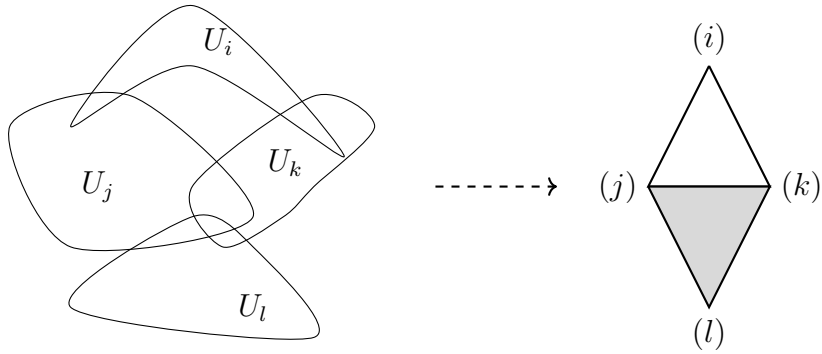

A key result in combinatorial topology, called \textit{the Nerve lemma}, asserts that for a collection of open contractible sets $\mathcal{U}$, for which all of the non-empty intersections of open sets are contractible, $\mathcal{N}(\mathcal{U})$ is homotopy equivalent to $\cup_i U_i$ \cite{ghrist2014elementary}. So, for the case of a good open cover\footnote{A \textit{good cover} is one for which all of the open sets are contractible and have contractible intersections.} $\mathcal{U}$ of a space $M$ we conclude that $\mathcal{N}(\mathcal{U})$ is homotopy equivalent to $M$. Therefore, we lose no topological information by forgetting about the open sets and instead working directly with the nerve. This is helpful because $\mathcal{N}(\mathcal{U})$ is a finite simplicial complex and so can be described with only a finite amount of information, making it amenable to storage on a computer. Note that the example nerve in Figure \ref{fig:Nerve} is clearly a deformation retraction of the open sets.

The redundancy implied by the Nerve lemma suggests that we if equip $\mathcal{N}(\mathcal{U})$ with some notion of a \textit{discrete vector bundle} then we should be able to reconstruct an approximate vector bundle over $\cup_i U_i$. We can then perform all of our topological calculations in the discrete realm and be guaranteed that we are correctly identifying the associated continuous object. 

This idea motivates the following definition for a local trivialisation of an approximate discrete vector bundle \cite{scoccola}.

\begin{defn}\label{DiscreteVectorBundle}
    A discrete $\epsilon$-approximate local trivialisation over a simplicial complex $K$ is an assignment of a Stiefel matrix $\Phi_{(i)}\in V(k;\R)$ to each vertex $(i)\in K^0$ such that for every edge $\simp{ij}\in K^1$ there exists an $\text{O}(k)$ matrix $\Omega_{(ij)}$ for which $||\Phi_{(i)}\Omega_{(ij)}-\Phi_{(j)}|| < \epsilon$.
\end{defn}

\begin{rmrk}
    Within the TIM counting problem the assignment of a specific local polarisation vector $\chi_{\simp{i}}\in \C^k$ to each vertex $\simp{i}\in K^0$ almost defines an approximate discrete local trivialisation over the mesh $K$. However, the vectors $\chi_{\simp{i}}$ are complex valued rather than real valued. To make a true discrete vector bundle we need to translate this complex line bundle into a real plane bundle as is discussed in Section \ref{sec:FirstChernClass}.
\end{rmrk}

A similar definition can also be made for a discrete approximate \v{C}ech cocycle $\Omega \in \check{C}^1(K,O(k))$, which is an assignment of an $O(k)$ rotation matrix to each edge $\simp{ij}\in K^1$ such that the for all triangles $\simp{ijk}\in K^2$ we have a bounded error $||\Omega_{\simp{ij}}\Omega_{\simp{jk}}-\Omega_{\simp{ik}}||<\epsilon$. Note that, we will generically use the notation $\check{C}^p(K,O(k))$ to refer to the set of assignments of $O(k)$ matrices to the $p$-simplices of $K$, and call such an assignment a \v{C}ech $p$-cochain. For example, $\Theta\in \check{C}^0(K,O(k))$ refers to an assignment of an $O(k)$ matrix $\Theta_{\simp{i}}$ for each vertex $\simp{i}\in K^0$.

\begin{rmrk}
    Extending a discrete approximate vector bundle over $\mathcal{N}(\mathcal{U})$ to the full space $\cup_i U_i$ is straightforward as we can just take each map in the continuous local trivialisation to be constantly $\Phi_{(i)}$. Scocolla and Perea proved a series of consistency and reconstruction theorems which show that as long as $\epsilon$ is small enough, the discrete vector bundle extends to the correct continuous vector bundle and further that the characteristic classes computed from it are representative of the true characteristic classes of the underlying continuous bundle. Details on the relevant technical constraints can again be found in the original publication, but we will largely ignore these details here \cite{scoccola}. 
\end{rmrk} 

\subsection{Stiefel--Whitney and Euler classes}\label{subsec:SWAndEuler}

The algorithms we present below for computing the first Chern number of a complex line bundle descend from the computation of the Stiefel--Whitney and Euler classes of real plane bundles. For reference we review their definition here. Note that in this subsection we suppress the square brackets $[\cdot]$ around cohomology classes to simplify the notation.

We adopt the following axiomatic definition for the Stiefel--Whitney classes of a vector bundle $E^k\rightarrow M$. 

\begin{defn}
    Define the full Stiefel--Whitney class $W(E^k) = 1+W_1(E^k)+\cdots$ as the unique element in the cohomology ring $H^*(M,\Z_2)$ satisfying the following 4 properties.
    \begin{enumerate}
        \item If $f:E^k\rightarrow F^k$ is a bundle map then $f^*W(F^k)= W(f^*F^k)$ \hfill (\textbf{Naturality})
        \item $W_p(E^k)\in H^p(B,\Z_2)$ and $W_p(E^k) = 0$ for $p>k$ \hfill (\textbf{Rank})
        \item $W(E^k\oplus F^l) = W(E^k)\smile W(F^l)$ \hfill (\textbf{Whitney Sum Formula})
        \item If $\gamma^1\rightarrow \R P^1$ is the real tautological line bundle then $\pair{W_1}{\mu_{\R P^1}}=1$ \hfill (\textbf{Normalisation})
    \end{enumerate}
\end{defn}

The notation $W_p(E^k)$ in the above indicates that we are referring to the p'th Stiefel--Whitney class of the bundle $E^k\rightarrow M$ specifically. Throughout the rest of this paper this distinction is irrelevant and so the argument is suppressed.

\begin{rmrk}
    It is possible to define $W_k$ in a more constructive way, either as obstruction classes or using the Thom isomorphism. Details on this can be found in either the standard text by Milnor \cite{milnor1974characteristic} or in Bott and Tu \cite{bott2013differential}. 
\end{rmrk}

For the purposes of the eigenbundle identification problem we will specifically need to consider the first Stiefel--Whitney class $W_1$. This is because the Euler class, which we will use to compute the Chern numbers, is only defined for oriented vector bundles. So, given a vector bundle we will we need to be able to check if it is orientable. $W_1$ is understood as the obstruction to orientability as stated in Theorem \ref{thrm:W1Obstruction} \cite{milnor1974characteristic}. 
\begin{thrm}\label{thrm:W1Obstruction}
    A vector bundle $E^k\rightarrow M$ is orientable if and only if its first Stiefel--Whitney class $W_1(E^k)\in H^1(M,\Z_2)$ is zero.
\end{thrm}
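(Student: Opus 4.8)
The plan is to prove both directions of the equivalence by exhibiting the precise link between orientability (as phrased in Definition \ref{def:orientation}, a choice of consistently oriented local trivialisation) and the vanishing of $W_1(E^k)\in H^1(M,\Z_2)$. The cleanest route is to reduce everything to the determinant line bundle $\det E^k$, a real line bundle whose transition functions are $\operatorname{sgn}\det(\Omega_{ij})\in\{\pm 1\}\cong\Z_2$. The key identity I would establish is $W_1(E^k)=W_1(\det E^k)$, which follows from the Whitney Sum Formula together with the fact that passing to top exterior powers turns $E^k$ into a sum of lines whose only surviving first Stiefel--Whitney contribution is that of the determinant line. Once we are working with a real line bundle $L=\det E^k$, the problem collapses to the single statement: \emph{a real line bundle is trivial (orientable) if and only if its $W_1$ vanishes}, and orientability of $E^k$ is equivalent to triviality of $L$.

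\medskip

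The steps, in order, would be as follows. First I would identify the $\Z_2$-valued \v{C}ech $1$-cocycle $g_{ij}:=\operatorname{sgn}\det(\Omega_{ij})$ coming from the transition functions of $E^k$; the cocycle condition \eqref{CocycleCondition} guarantees $g_{ik}=g_{ij}g_{jk}$, so this is a genuine element of $\check{H}^1(M,\Z_2)$. Second, I would invoke the standard fact that, under the isomorphism $\check{H}^1(M,\Z_2)\cong H^1(M,\Z_2)$, this class is exactly $W_1(E^k)$; for the simplicial/\v{C}ech setting developed in this paper, this is essentially the obstruction-theoretic or Thom-isomorphism description of $W_1$ alluded to in the remark above, so I would cite Milnor--Stasheff \cite{milnor1974characteristic} rather than reconstruct it. Third, for the direction ($E^k$ orientable $\Rightarrow W_1=0$): by Definition \ref{def:orientation}, orientability lets us choose an equivalent trivialisation with $\det(\Omega_{ij})>0$ everywhere, whence $g_{ij}\equiv 1$, so the representing cocycle is the trivial cocycle and $W_1(E^k)=0$. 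Fourth, for the converse ($W_1=0\Rightarrow$ orientable): if the class vanishes, the cocycle $g_{ij}$ is a coboundary, i.e. there exist $\varepsilon_i\in\{\pm1\}$ with $g_{ij}=\varepsilon_i\varepsilon_j$; modifying each frame $\Phi_i$ by the sign $\varepsilon_i$ (swapping one basis vector's orientation where $\varepsilon_i=-1$) yields an equivalent local trivialisation whose new transition functions all have positive determinant, which is precisely orientability.

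\medskip

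The main obstacle I anticipate is not either implication per se but the identification step connecting the concrete determinant-sign cocycle to the axiomatically defined $W_1$. The Stiefel--Whitney class was introduced above purely through the four axioms (Naturality, Rank, Whitney Sum, Normalisation), so I cannot directly read off that $W_1$ equals $[g_{ij}]$; I must either derive the identity $W_1(E^k)=W_1(\det E^k)$ from the axioms and then separately argue that $W_1$ of a real line bundle is represented by its $\pm1$ transition cocycle, or appeal to the obstruction-theoretic characterisation of $W_1$ flagged in the earlier remark. The honest resolution is to use that $W_1(L)$ for a line bundle $L$ is, by the Normalisation axiom applied along loops (equivalently, by naturality pulled back from $\R P^1$), the obstruction to finding a nonvanishing section up to sign — which is exactly the class $[g_{ij}]$. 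I would therefore lean on \cite{milnor1974characteristic} for this bridge and keep the paper's own contribution at the level of the two elementary cocycle manipulations in steps three and four, which are the parts that genuinely use Definition \ref{def:orientation}.
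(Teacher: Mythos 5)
Your proposal is sound in outline, but note that the paper itself does not prove Theorem \ref{thrm:W1Obstruction} at all: it is stated as a known result and attributed to Milnor--Stasheff \cite{milnor1974characteristic}, so there is no in-paper proof to match against. What you have written is the standard argument, and your honest deferral of the one genuinely hard step --- the identification of the axiomatically defined $W_1(E^k)$ with the class of the determinant-sign cocycle $g_{ij}=\operatorname{sgn}\det(\Omega_{ij})$ --- to the same reference is exactly the move the paper makes, just at coarser granularity. Your two elementary halves are correct and, interestingly, are precisely the computation the paper does carry out one theorem later: in the proof of Theorem \ref{thrm:orientbundle} the relation $\det\Omega_{\simp{ij}}=(-1)^{(W_1)_{\simp{ij}}}$ together with the ansatz $\det\Theta_{\simp{i}}=(-1)^{\gamma_{\simp{i}}}$ yields $(W_1)_{\simp{ij}}=\gamma_{\simp{i}}+\gamma_{\simp{j}}$, i.e. $W_1=B_1^\intercal\gamma$, which is your coboundary equation $g_{ij}=\varepsilon_i\varepsilon_j$ in simplicial dress; so your converse direction is literally the paper's orientation procedure read backwards. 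Two small cautions. First, in the forward direction you should note explicitly that the class $[g_{ij}]$ is invariant under change of trivialisation (a change by $\Theta_i$ multiplies $g_{ij}$ by the coboundary $\det\Theta_i\det\Theta_j$), otherwise ``choose an oriented trivialisation, hence $g\equiv 1$'' only shows one representative is trivial. Second, the sentence claiming that ``passing to top exterior powers turns $E^k$ into a sum of lines'' is not right as stated: $\det E^k$ is a single line bundle and $E^k$ itself need not split; the identity $W_1(E^k)=W_1(\det E^k)$ requires the splitting principle (pull back to a flag bundle where $E^k$ splits, apply the Whitney sum formula, and use injectivity of the pullback). For this theorem the detour through $\det E^k$ is unnecessary anyway, since the obstruction-theoretic description of $W_1$ already hands you $[g_{ij}]$ directly.
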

Computing $W_1$ will allow us to determine if a bundle is orientable and, if so, prescribe a global orientation for it. This is be discussed in detail in Section \ref{sec:Orientingthebundle}.

We can adopt a similar axiomatic definition of the Euler class for an oriented vector bundle $E^k\rightarrow M$.
\begin{defn}
    The Euler class $e(E^k)\in H^k(M,\Z)$ is the unique cohomology class satisfying the following 4 properties.
    \begin{enumerate}
        \item If $f:E^k\rightarrow F^k$ is an oriented bundle map\footnote{By which, we mean $f$ is orientation preserving on fibers.} then $f^*e(F^k)= e(f^*F^k)$ \hfill (\textbf{Naturality})
        \item If $\bar{E}^k$ refers to the oppositely oriented bundle then $e(\bar{E}^k)=-e(E^k)$ \hfill (\textbf{Orientation})
        \item $e(E^k\oplus F^l) = e(E^k)\smile e(F^l)$ \hfill (\textbf{Whitney Sum Formula})
        \item For $\gamma^1\rightarrow \C P^1$ the tautological line bundle then $\pair{e}{\mu_{\C P^1}}=-1$ \hfill (\textbf{Normalisation})
    \end{enumerate}
\end{defn}
Both the existence of such a unique element, and its geometric interpretation, are not trivial to establish. As above, details can be found in  \cite{milnor1974characteristic,bott2013differential}. 

\begin{rmrk}
    One common characterisation is that $e$ is the obstruction to the existence of a non-vanishing section of the vector bundle. To see this, suppose that $E^k\rightarrow M$ admits a non-vanishing section $s:M\rightarrow E^k$, then we can use this section to construct a trivial sub-bundle of $E^k$ by attaching to each point $p\in M$ the vector space $\text{span}(s(p))\in \pi^{-1}(p)$. Hence $E^k$ splits as a direct sum $E^k = F^{k-1}\oplus \varepsilon$, where $\varepsilon$ refers to the trivial line bunndle over $M$. Therefore the Whitney sum formula tells us that $e(E^k) = e(F^{k-1})\smile e(\varepsilon)$, but since $\varepsilon$ is a trivial bundle $e(\varepsilon)=0$ which implies $e(E^k)=0$ since the cup product of any class with the zero cohomology class gives zero. Hence, if $e(E^k)\neq 0$ then $E^k$ does not admit a non-vanishing section.
\end{rmrk}

Note that these axiomatic definitions of $W_p$ and $e$ are perfectly applicable in the case of discrete vector bundles over a simplicial complex $K$ as well. The only real difference is we think of them classes $W_p\in H^p(K,\Z_2)$ and $e\in H^k(K,\Z)$ in the simplicial cohomology of $K$ rather than the singular cohomology of the base manifold. The algorithms we present in Section $\ref{Section:EulerClass}$ compute these characteristic classes in this simplicial category.

\subsection{First Chern class of a line bundle}\label{sec:FirstChernClass}

Recall that in Section \ref{sec:TopologicalInfo} we proposed to compute $C_1$ for a complex line bundle by evaluating the first Chern class $[c_1]$ on the fundamental class $[\mu]$ of the base space. As such we need to construct $[c_1]$, which we do by appealing to its definition, for the case of line bundles, in terms of the Euler class $[e]$.

Let $L^1_{\C}\rightarrow B$ be a complex line bundle specified by a local trivialisation $\Phi$. In the complex case, each map in the local trivialisation, $\Phi_i:U_i\rightarrow V(1;\C)$ returns a normalised complex vector at each point $x\in U_i$. We will abuse notation and just refer to this vector as $\chi_i(x) \in \C^\infty$. When implementing this theory numerically it is preferable to work with vectors with only real entries\footnote{This is because Scocolla and Perea's approach to construct \v{C}ech cocycles from local trivialisations, which we present below, requires that all matrices be real.}, so we will inject $\gamma:\C^\infty\rightarrow\R^\infty$ by the obvious map $(u_1+iv_1,u_2+iv_2,\ldots)\overset{\gamma}{\mapsto} (u_1,v_1,u_2,v_2,\ldots)$. Therefore, we can treat our complex line bundle as specified by the maps $\gamma\circ \chi_i$ which assign a real line to each point in the base. 

Every complex line bundle is associated to an oriented rank-2 real vector bundle $L^2_\R\rightarrow B$ called its \textit{realisation}. We define the realisation by a local trivialisation as follows:
\begin{defn}
    Let $\Phi_i:U_i\rightarrow V(1;\C)$ with $\Phi_i(x) = [\chi_i(x)]$ be a local trivialisation of a complex line bundle. Then its realisation is defined by a local trivialisation $\Psi_i:U_i\rightarrow V(2,\R)$ with frames $\Psi_i(x) = [\gamma( \chi_i(x)),\gamma( J\chi_i(x))]$, where $J$ refers to the linear complex structure on $\C^\infty$. 
\end{defn}

\begin{rmrk}
    The choice of the ordering $(\chi, J\chi)$ on the basis vectors is important. It defines a canonical local orientation induced on the real fibers by the complex structure on each complex fiber. It can be shown that this local orientation is actually global and so all complex vector bundles are automatically orientable, and canonically oriented \cite{milnor1974characteristic}.
\end{rmrk}

There are many equivalent definitions of the Chern classes of a complex line bundle but the following is particularly useful computationally. 

\begin{defn}
    The first (and only non-trivial) Chern class $[c_1]$ of the line bundle $L^1_\C\rightarrow M$ is the Euler class $[e]$ of its realisation $L^2_\R\rightarrow M$.
\end{defn}

\begin{rmrk}
    This is actually a sufficient definition to construct the Chern classes of a general complex vector bundle via the splitting principle \cite{bott2013differential}. We only require the line bundle case for our current TIMs counting problem 
\end{rmrk}

So, the problem of computing $[c_1]$ for our local polarisation bundle reduces to constructing its realisation and then computing the Euler class. Conveniently, a simple algorithm for computing the Euler class of an oriented rank-2 bundle has been produced by Scocolla and Perea \cite{scoccola}.

Similarly to the discussion above, this construction of the realisation of a complex line bundle and its Chern class $[c_1]$ can be extended into the realm of discrete vector bundles.  A discrete complex line bundle over a simplicial complex $K$ is an assignment of a complex Stiefel matrix $\Phi_{\simp{i}} = [\chi_{\simp{i}}] \in V(1;\C)$ to each vertex $\simp{i}\in K^0$. Its realisation is specified by a discrete local trivialisation assigning $\Psi_{\simp{i}}= [\gamma ( \chi_{\simp{i}}),\gamma( J\chi_{\simp{i}})]\in V(2;\R)$ to $\simp{i}$ and its Chern class is defined as the simplicial Euler class of this associated discrete real bundle. The lifting to the approximate case follows by allowing for the realisation to be a discrete approximate vector bundle.

\section{Calculating the Chern number}\label{Section:EulerClass}

In this section we elaborate on how to calculate the Chern number for our TIMs counting problem. The focus of this section is demonstrating how to translate topological concepts into explicit numerical algorithms.

Assume that we are given a simplicial mesh $K$ surrounding the Weyl point, and for each vertex $\simp{i}\in K^0$ we have obtained a local polarisation vector $\chi_{\simp{i}}\in \C^k\subset \C^\infty$. This constitutes a discrete approximate complex line bundle over $K$. Further, assume that we have constructed a local trivialisation $\Phi$ of the realisation of this bundle\footnote{Which, as explained above, amounts to an array $\{\Phi_{\simp{i}} = [\gamma(\chi_{\simp{i}}),\gamma(J\chi_{\simp{i}})]\}_{\simp{i} \in K^0}$ of $V(2;\R)$ Stiefel matrices indexed by the vertices of our simplicial mesh.}, as was detailed in Section \ref{sec:FirstChernClass}. From $\Phi$ we seek to compute $[c_1] = [e]$, $[\mu_K]$ and the pairing $\text{Ch}_1 = \pair{[c_1]}{[\mu_K]}$. 

\begin{rmrk}
    Throughout this section we will assume that our local trivialisation $\Phi$ defines an approximate discrete vector bundle with small enough error $\epsilon$ that its characteristic classes are consistent with the underlying exact local polarisation bundle. Whether or not this assumption holds depends upon the geometry of the point sample from which $K$ is constructed as well as local gradient of the polarisation bundle. In practice, for a given sample, $\epsilon$ can be computed numerically so that the accuracy of the discrete vector bundle can be checked and a new sample generated if needed.
\end{rmrk} 

Firstly, we will discuss how to generate a \v{C}ech cocycle witnessing the trivialisation, since this will be required to calculate the Euler class. We will then explain how to orient a bundle whose trivialisation is not given to us pre-oriented. Next, we present Scoccola and Perea's algorithm for calculating the Euler class of the bundle, at which point we will have all of the necessary steps to calculate $[c_1]$. We then turn to computing $[\mu_K]$, and specifically how to geometrically orient this class. The section concludes with a brief technical note regarding some specific choices we made when producing our \verb!Julia! implementation of the algorithm from Section \ref{sec:TopologicalInfo}.

\subsection{Constructing the \v{C}ech cocyle}

\begin{figure}
    \centering
    \begin{subfigure}[t]{0.49\textwidth}
    \centering
    \begin{tikzpicture}[line cap=round, line join=round,3d view={105}{10},>=stealth,scale=1.25,transform shape,remember picture]
	\coordinate (x1) at (0,0,0);
	\coordinate (x2) at (0,2,1);
	\coordinate (x4) at (-2,0,2);
    \coordinate (x3) at ($(x2)+(x4)$);
    \draw[draw=black] (x3) circle (0pt) node[above right] {${\Phi_{\simp{j}}}$};
    
	\coordinate (n1) at ($(x2) - (x1)$);
	\coordinate (n2) at ($(x2) - (x3)$); 
	\coordinate (O) at (0,0,0);
	\draw[-latex] (-1,0,0) -- (2.5,0,0) node[pos = 1.05] {$x$};
	\draw[-latex] (0,-1,0) -- (0,2.5,0) node[pos = 1.05] {$y$};
	\draw[-latex] (0,0,-1.6) -- (0,0,3) node[pos = 1.05] {$z$};
	
	\path[draw=black, pattern=vertical lines, pattern color = blue!40, thick, opacity = 0.8] (x1) -- (x2) -- (x3) -- (x4) -- (x1);
    \draw [very thick, blue, ->] (O) -- (x2);
    \draw [very thick, blue, ->] (O) -- (x4);

    \coordinate (z1) at (0,0,0);
	\coordinate (z2) at (1.42,0.314,-1.26);
	\coordinate (z4) at (-0.47,2.53,0.94);
    \coordinate (z3) at ($(z2)+(z4)$);
    \draw[draw=black] (z3) circle (0pt) node[below] {${\Phi_{\simp{i}}}$};

    \tikzset{point/.style = {}}
    \node[point] (A) at (0, 2.6,1.1) {};

	\path[draw=black, pattern=north east lines, pattern color = red!40, opacity = 0.8] (z1) -- (z2) -- (z3) -- (z4) -- (z1);
    \draw [thick, red, ->] (O) -- (z2);
    \draw [thick, red, ->] (O) -- (z4);
    \end{tikzpicture}
    
\end{subfigure}
\begin{subfigure}[t]{0.49\textwidth}
    \centering
    \begin{tikzpicture}[line cap=round, line join=round,3d view={105}{10},>=stealth,scale=1.25,transform shape,remember picture]
	\coordinate (x1) at (0,0,0);
	\coordinate (x2) at (0,2,1);
	\coordinate (x4) at (-2,0,2);
    \coordinate (x3) at ($(x2)+(x4)$);
	\coordinate (n1) at ($(x2) - (x1)$);
	\coordinate (n2) at ($(x2) - (x3)$); 
	\coordinate (O) at (0,0,0);
	\draw[-latex] (-1,0,0) -- (2.5,0,0) node[pos = 1.05] {$x$};
	\draw[-latex] (0,-1,0) -- (0,2.5,0) node[pos = 1.05] {$y$};
	\draw[-latex] (0,0,-1.6) -- (0,0,3) node[pos = 1.05] {$z$};
    
    \draw[draw=black] (x3) circle (0pt) node[above right] {${\Phi_{\simp{j}}}$};
	
	\path[draw=black, pattern=vertical lines, pattern color = blue!40, thick, opacity = 0.8] (x1) -- (x2) -- (x3) -- (x4) -- (x1);
    \draw [very thick, blue, ->] (O) -- (x2);
    \draw [very thick, blue, ->] (O) -- (x4);

    \coordinate (y1) at (0,0,0);
	\coordinate (y2) at (0,2.5,0.5);
	\coordinate (y4) at (-1.5,+0.5,1.5);
    \coordinate (y3) at ($(y2)+(y4)$);
    \draw[draw=black] (y3) circle (0pt) node[above right] {${\Phi_{\simp{i}}\Omega_{\simp{ij}}}$};

    \tikzset{point/.style = {}}
    \node[point] (B) at (0,-0.5,1.1) {}; 

	\path[draw=black, pattern=north east lines, pattern color = red!40, opacity = 0.8] (y1) -- (y2) -- (y3) -- (y4) -- (y1);
    \draw [thick, red, ->] (O) -- (y2);
    \draw [thick, red, ->] (O) -- (y4);
    \end{tikzpicture}
\end{subfigure}
\begin{tikzpicture}[overlay, remember picture]
          \path[->,black,dashed,thick] (A) edge (B);
          \node[above] (b) at ($(A)!0.5!(B)$) {$\Omega_{\simp{ij}}$};
\end{tikzpicture}
    \caption{Diagram showing how the solution to the Procustes problem produces the best possible alignment of the bases for two nearby planes.}
    \label{fig:ProcustesProblem}
\end{figure}
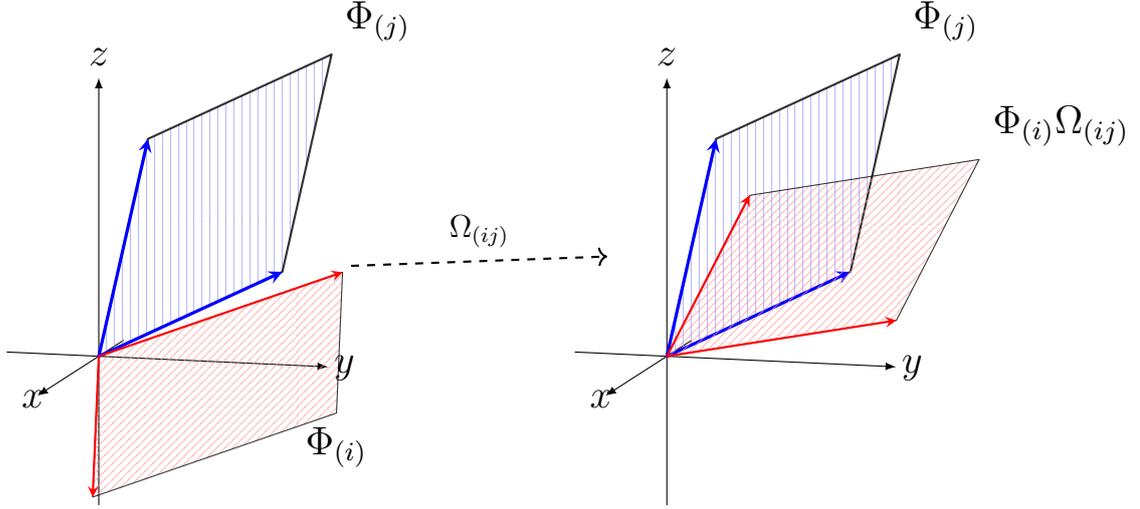

Suppose we are given a simplicial complex $K$ and a local trivialisation $\Phi$ of a approximate discrete vector bundle over it. From the definition of a local trivialisation we know that there exists a $O(k)$ \v{C}ech cocycle $\Omega$ witnessing it. That is, such that
\begin{equation}
    ||\Phi_{\simp{i}}\Omega_{\simp{ij}} - \Phi_{\simp{j}}|| < \epsilon\,,
\end{equation}
for all $\simp{ ij} \in K^1$. Later, we will see that we need the witness $\Omega$ to calculate the Euler class $[e]$. Therefore, we will need to find an explicit set of matrices solving the above inequality. Scocolla and Perea propose a method for this which we reproduce here \cite{scoccola}.

Observe that, if we have an $O(k)$ matrix for which $||\Phi_{\simp{i}}\Omega_{\simp{ij}} - \Phi_{\simp{j}}|| < \epsilon\,$ then any $\Omega'_{\simp{ij}}$ which makes this distance even smaller would also be sufficient to witness $\Phi$. So for each edge in the complex we can just seek the orthogonal matrix which makes this distance as small as possible. Hence we need to solve the following optimisation problem
\begin{equation}\label{eq:Procrustes}
   \Omega_{\simp{ij}} =  \argmin{\Omega \in O(k)}||\Phi_{\simp{i}}\Omega - \Phi_{\simp{j}}||\,,
\end{equation}
which is a well known problem called the Orthogonal Procrustes problem \cite{schonemann1966generalized}. Geometrically, the Procrustes problem answers the question ``given bases for two subspaces, what is the best possible alignment of the two bases?''. Figure \ref{fig:ProcustesProblem} presents a two dimensional example of this idea by indicating that each basis specifies a parallelogram and the choice of optimal rotation maximally aligns the parallelograms while keeping them within the same subspace.

\begin{algorithm}
\setstretch{1.15}
\SetAlgoLined
\DontPrintSemicolon
\KwData{An array of $V(k;\R)$ matrices $\{\Phi_{\simp{i}}\}_{{\simp{i}}\in K^0}$.}
\KwResult{An array of $O(k)$ matrices $\{\Omega_{\simp{ij}}\}_{\simp{ ij }\in K^1}$}
\For{$\simp{ ij}\in K^1$}{
Calculate the SVD, $USV^\dagger = \Phi_{\simp{i}}^\intercal\Phi_{\simp{j}}$\;
$\Omega_{\simp{ij}}\gets UV^\dagger$\;
}
\cprotect\caption{\verb|approxcechcocycle|}\label{alg:approxcechcocycle}
\end{algorithm}

A solution to \eqref{eq:Procrustes} can always be obtained by the Singular Value Decomposition (SVD). Specifically, if $U\Sigma V^\dagger$ is the SVD of $\Phi_{\simp{i}}^\intercal\Phi_{\simp{j}}$ then the solution to the Procrustes problem is $\Omega_{\simp{ij}} = UV^\dagger$. This provides a algorithm for obtaining a \v{C}ech cocycle witnessing $\Phi$. The pseudocode for this is presented as Algorithm \verb!approxcechcocycle!.

\subsection{Orienting the discrete bundle}\label{sec:Orientingthebundle}

Before moving on to discussing calculating $[e]$, we must briefly discuss some data sanitisation. This is because $[e]$ is strictly an invariant of oriented vector bundles but a local trivialisation derived from data we may not be given to us properly oriented. That is, while the frame $\Phi_{\simp{i}}$ attached to each vertex of $K$ will be individually oriented, the chosen orientations may not be consistent across all of the fibers. Further, our data derived vector bundle may not even be orientable and hence we need to be able to determine its orientability to know if computing the Euler class is possible\footnote{As noted above, all complex vector bundles are automatically orientable. and canonically oriented, so this is more relevant for real plane bundles than strictly for complex line bundles. Hence this step is not always necessary but we still consider it a cheap and worthwhile sanitisation process.}. Recall from Theorem \ref{thrm:W1Obstruction} that the orientability of a vector bundle is determined entirely by its first Stiefel--Whitney class $[W_1]$.

For the case of approximate discrete vector bundles over complexes,  we can compute a representative of $[W_1]$ for an approximate $O(k)$ \v{C}ech cocycle $\Omega$  by calculating the determinant of each transition function in the cocycle \cite{scoccola}. Recall that a $1$-cochain, such as $W_1\in C^1(K,\Z_2)$, is a linear function which assigns a $\Z_2$ (binary) value to each edge in $K$. We can therefore represent $W_1$ as a binary array indexed by the edges of the complex $W_1 = \{(W_1)_{\simp{ij}}\}_{\simp{ij}\in K^1}$. In this notation $W_1$ and the determinant of the cocycle matrices $\Omega$ care related as follows
\begin{equation}
    \det \Omega_{\simp{ij}} = (-1)^{(W_1)_{\simp{ij}}}\,.
\end{equation}

This fact suggests the algorithm presented as \verb!sw1!. 

\begin{algorithm}
\setstretch{1.15}
\SetAlgoLined
\DontPrintSemicolon
\KwData{An array of $O(k)$ matrices $\{\Omega_{\simp{ij}}\}_{\simp{ ij }\in K^1}$.}
\KwResult{A binary array $\{(W_1)_{\simp{ij}}\}_{\simp{ij}\in K^1}$. \tcp*{\raggedright Representing $[W_1]\in H^2(K,\Z_2)$}}
\For{$\simp{ ij}\in K^1$}{
    \eIf{$\det(\Omega_{\simp{ij}}) < 0 $}{
        $(W_1)_{\simp{ij}} \gets 1$\;
    }{
        $(W_1)_{\simp{ij}} \gets 0$\;
    }
    
}
\cprotect\caption{\verb|sw1|}\label{alg:sw1}
\end{algorithm}

In principle, the above allows us to check if a given vector bundle is orientable but it does not immediately prescribe how to orient said bundle. Conveniently, $W_1$ does contain the information required to \textit{orient the bundle}. Scocolla and Perea explain how to do this at the \v{C}ech cocycle level, but we prefer to work on the local trivialisations if possible and so we present an alternative construction which is described by Theorem \ref{thrm:orientbundle}.

\begin{thrm}\label{thrm:orientbundle}
    If $\Phi$ is a local trivialisation of a discrete vector bundle over $K$ and $\gamma\in C^0(K,\Z_2)$ is a binary array solving $B_1^\intercal \gamma = W_1$, then $\Phi$ can be transformed into a oriented local trivialisation of the same bundle by switching the orientation of each $\Phi_{\simp{i}}$ for which $\gamma_{\simp{i}} = 1$.
\end{thrm}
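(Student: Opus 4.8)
The plan is to track how the orientation‑switching operation transforms the witnessing \v{C}ech cocycle $\Omega$ and then simply read off the determinants of the resulting transition functions. First I would set up the switching operation explicitly. Switching the orientation of the frame $\Phi_{\simp{i}}$ means right‑multiplying it by a fixed orientation‑reversing orthogonal matrix (for instance, negating one basis vector); concretely, I define $\Phi'_{\simp{i}} = \Phi_{\simp{i}}\sigma_{\simp{i}}$, where $\sigma_{\simp{i}}\in O(k)$ is chosen so that $\det\sigma_{\simp{i}} = (-1)^{\gamma_{\simp{i}}}$, i.e. $\sigma_{\simp{i}} = \Id$ when $\gamma_{\simp{i}}=0$ and a reflection when $\gamma_{\simp{i}}=1$. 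Because right multiplication by an orthogonal matrix only changes the orthonormal basis spanning each fibre without changing the fibre itself, $\Im\Phi'_{\simp{i}} = \Im\Phi_{\simp{i}}$, so $\Phi'$ is automatically a local trivialisation of the \emph{same} bundle. The entire content of the theorem is then to verify that $\Phi'$ is oriented in the sense of Definition \ref{def:orientation}, i.e. that all of its transition functions have positive determinant.

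Next I would compute how the witnessing cocycle transforms. For the primed frames the transition function is again the Procrustes solution \eqref{eq:Procrustes}, obtained from the SVD of $(\Phi'_{\simp{i}})^\intercal\Phi'_{\simp{j}}$. Since $(\Phi'_{\simp{i}})^\intercal\Phi'_{\simp{j}} = \sigma_{\simp{i}}^\intercal(\Phi_{\simp{i}}^\intercal\Phi_{\simp{j}})\sigma_{\simp{j}}$, writing the SVD $\Phi_{\simp{i}}^\intercal\Phi_{\simp{j}} = USV^\dagger$ gives $(\Phi'_{\simp{i}})^\intercal\Phi'_{\simp{j}} = (\sigma_{\simp{i}}^\intercal U)\, S\, (\sigma_{\simp{j}}^\intercal V)^\dagger$, which is again a valid SVD because $\sigma_{\simp{i}}^\intercal U$ and $\sigma_{\simp{j}}^\intercal V$ are orthogonal and the singular values $S$ are unchanged. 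Hence the new witness is
\begin{equation}
    \Omega'_{\simp{ij}} = (\sigma_{\simp{i}}^\intercal U)(\sigma_{\simp{j}}^\intercal V)^\dagger = \sigma_{\simp{i}}^\intercal\,\Omega_{\simp{ij}}\,\sigma_{\simp{j}}\,.
\end{equation}
Taking determinants and using $\det\sigma_{\simp{i}}^\intercal = (-1)^{\gamma_{\simp{i}}}$ together with the defining relation $\det\Omega_{\simp{ij}} = (-1)^{(W_1)_{\simp{ij}}}$ yields
\begin{equation}
    \det\Omega'_{\simp{ij}} = (-1)^{\gamma_{\simp{i}} + (W_1)_{\simp{ij}} + \gamma_{\simp{j}}}\,.
\end{equation}
Finally I would identify the exponent with the hypothesis: the transpose of the boundary matrix acts on a $0$-cochain as the simplicial coboundary, $(B_1^\intercal\gamma)_{\simp{ij}} = \gamma_{\simp{j}} - \gamma_{\simp{i}}$, which over $\Z_2$ equals $\gamma_{\simp{i}}+\gamma_{\simp{j}}$. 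Thus $B_1^\intercal\gamma = W_1$ means precisely $(W_1)_{\simp{ij}} = \gamma_{\simp{i}}+\gamma_{\simp{j}} \pmod 2$ for every edge; substituting, the exponent vanishes and $\det\Omega'_{\simp{ij}} = +1$ for all $\simp{ij}\in K^1$, so $\Phi'$ is oriented by Definition \ref{def:orientation}.

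The step I expect to require the most care is the transformation law $\Omega'_{\simp{ij}} = \sigma_{\simp{i}}^\intercal\Omega_{\simp{ij}}\sigma_{\simp{j}}$ in the \emph{approximate} setting. For an exact trivialisation one would simply cancel the injective $\Phi_{\simp{i}}$ from $\Phi'_{\simp{i}}\Omega'_{\simp{ij}} = \Phi'_{\simp{j}}$, but here $\Omega_{\simp{ij}}$ is only a Procrustes minimiser and no exact witness equation holds. The SVD push‑through above is what resolves this cleanly, since the frame switches merely conjugate the orthogonal factors $U,V$ without touching $S$ and therefore commute with the Procrustes construction; once this is established the remainder is routine $\Z_2$ bookkeeping. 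I would also note in passing that the same multiplicativity of determinants forces $W_1$ to be a genuine $\Z_2$-cocycle (the approximate cocycle bound $||\Omega_{\simp{ik}}-\Omega_{\simp{ij}}\Omega_{\simp{jk}}||<3\epsilon$ pins the $\pm 1$ determinants exactly for small $\epsilon$), so that a solution $\gamma$ of $B_1^\intercal\gamma = W_1$ exists exactly when $[W_1]=0$, consistent with Theorem \ref{thrm:W1Obstruction}.
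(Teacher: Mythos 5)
Your proof is correct and follows essentially the same route as the paper's: both reduce to the observation that the orientation switches conjugate the witnessing cocycle, $\Omega'_{\simp{ij}} = \sigma_{\simp{i}}^\intercal\Omega_{\simp{ij}}\sigma_{\simp{j}}$, so that $\det\Omega'_{\simp{ij}} = (-1)^{\gamma_{\simp{i}}+(W_1)_{\simp{ij}}+\gamma_{\simp{j}}}$, and then identify $\gamma_{\simp{i}}+\gamma_{\simp{j}}$ with $(B_1^\intercal\gamma)_{\simp{ij}}$ over $\Z_2$ (you run the argument forward from the modified trivialisation, the paper runs it backward from the orientability requirement, but the computation is identical). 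Your SVD push-through is a nice refinement that handles the approximate case more cleanly than the paper's remark does; just note that since the SVD is not unique you should justify that the algorithm's output transforms as claimed by observing that $UV^\dagger$ is the orthogonal polar factor of $\Phi_{\simp{i}}^\intercal\Phi_{\simp{j}}$, which is unique whenever that matrix is nonsingular (guaranteed for small $\epsilon$).
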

\begin{proof}
    Let $\Omega \in \check{C}^1(K,O(k))$ be a set of transition functions which ``witness'' $\Phi$. That is, a discrete \v{C}ech cocycle such that
    \begin{equation}\label{discretewitness}
        \Phi_{\simp{i}} \Omega_{\simp{ij}} = \Phi_{\simp{j}}\,.
    \end{equation}
    Given a \v{C}ech $0$-cochain $\Theta \in \check{C}^0(K,O(k))$ - an assignment of an $O(k)$ matrix to each vertex - we can generate a new set of transition functions describing an equivalent vector bundle by the transformation
    \begin{equation}\label{changeofbasistransform}
        (\Theta\cdot \Omega)_{\simp{ij}} = \Theta_{\simp{i}}\Omega_{\simp{ij}}\Theta_{\simp{j}}^\intercal\,.
    \end{equation}
    ``Orienting the bundle'' amounts to choosing $\Theta$ such that $\Theta\cdot \Omega$ has positive determinant everywhere. Therefore, we need that 
    \begin{equation}\label{OrientatingRequirement}
        1 = \det(\Theta_{\simp{i}})\det(\Omega_{\simp{ij}})\det(\Theta_{{\simp{j}}})\,,
    \end{equation}
    but we recall that $\det(\Omega_{\simp{ij}}) = (-1)^{(W_1)_{\simp{ij}}}$ and $\det(\Theta_{\simp{i}}) = \pm 1$, so we can assume that there exists a simplicial cochain $\gamma \in C^0(K,\Z_2)$ for which $\det \Theta_{\simp{i}} = (-1)^{\gamma_{\simp{i}}}$. Therefore, \eqref{OrientatingRequirement} implies that we need
    \begin{equation}
        1 = (-1)^{\gamma_{\simp{i}}+(W_1)_{\simp{ij}}+\gamma_{\simp{j}}}\,.
    \end{equation}
    Then, equating powers and noting that we are working modulo 2 we have 
    \begin{equation}
        (W_1)_{\simp{ij}} = \gamma_{\simp{i}}+\gamma_{\simp{j}}   \,.
    \end{equation}
    We now recognise the following chain of equalities\footnote{Each of which follows directly from the algebra of simplicial cohomology as discussed in Appendix \ref{AppendixA:TDA}.}
    \begin{equation}
        \gamma_{\simp{i}}+\gamma_{\simp{j}} = \pair{\gamma}{\simp{ i}}+\pair{\gamma}{\simp{ j}} = \pair{\gamma}{\simp{ i}+\simp{ j}} = \pair{\gamma}{B_1 \simp{ ij} } =\pair{B_1^\intercal\gamma}{\simp{ ij}} = (B_1^\intercal\gamma)_{\simp{ij}}\,,
    \end{equation}
    where $B_1$ is the boundary matrix on edges of $K$. This then yields the cochain equation
    \begin{equation}\label{cohomologyrepofSW1}
        W_1 = B_1^\intercal \gamma\,.
    \end{equation}
    
    If the bundle is orientable then Theorem \ref{thrm:W1Obstruction} guarantees that $W_1$ is zero in cohomology and so must be the coboundary of a zero cochain. That is, a solution to \eqref{cohomologyrepofSW1} exists. We can then choose $\Theta$ to be the identity on each $\simp{i}$ such that $\gamma_{\simp{i}}=0$ and a parity flip on the other open sets. This constructs a valid \v{C}ech cochain solving  \eqref{OrientatingRequirement}. 
    
    The final question we need to consider then is ``what local trivialisation $\Phi'$ does $\Theta\cdot \Omega$ witness?''. This just requires a small amount of algebra. Start with \eqref{changeofbasistransform}, which can be rearranged using the orthogonality of the $\Theta$ matrices to 
    \begin{equation}
        \Omega_{\simp{ij}} = \Theta_{\simp{i}}^\intercal (\Theta\cdot\Omega)_{\simp{ij}}\Theta_{\simp{j}}\,.
    \end{equation}
    Substituting this into \eqref{discretewitness} yields 
    \begin{equation}
        \Phi_{\simp{i}} \Theta_{\simp{i}}^\intercal (\Theta\cdot\Omega)_{\simp{ij}}\Theta_{\simp{j}} = \Phi_{\simp{j}} \,, \nonumber\,,
    \end{equation}
    which can be rearranged, again by the orthogonality of $\Theta$, to 
    \begin{equation}
        \Phi_{\simp{i}} \Theta_{\simp{i}}^\intercal (\Theta\cdot\Omega)_{\simp{ij}} = \Phi_{\simp{j}} \Theta_{\simp{j}}^\intercal\,,
    \end{equation}
    which we written in the same form as \eqref{discretewitness}. In other words, our oriented transition functions witness a local trivialisation $\Phi\cdot \Theta^\intercal $ given by acting on each $\Phi_{\simp{i}}$ with the change of basis $\Theta_{\simp{i}}^\intercal$. Since each $\Theta_{\simp{i}}$ is either the identity (where $\gamma_{\simp{i}}=0)$ or a parity flip this transformation amounts to switching the orientation of $\Phi_{\simp{i}}$ wherever $\gamma_{\simp{i}}=1$. Therefore, performing these orientation flips on $\Phi$ will produce $\Phi\cdot \Theta^\intercal$ which we know is an oriented local trivialisation.
\end{proof}

\begin{rmrk}
    In the above we assumed, for simplicity, that $\Phi$ defined an exact discrete vector bundle. The analogous theorem for an approximate discrete vector bundle also holds and adjusting the proof to this case just requires constructing the same argument with the cocycle condition weakened, as long as $\epsilon$ is small enough. Technically, one use of the rotation invariance of the Frobenius matrix norm is also required. 
\end{rmrk}

Theorem \ref{thrm:orientbundle} then allows us to construct an explicit algorithm for taking an unoriented bundle and orienting it. The pseudocode for this algorithm is included as \verb!orientbundle!.

\begin{algorithm}
\setstretch{1.15}
\SetAlgoLined
\DontPrintSemicolon
\KwData{An array of $V(k;\R)$ matrices $\{\Phi_{\simp{i}}\}_{{\simp{i}}\in K^0}$.}
\KwResult{An array of $V(k;\R)$ matrices $\{\Phi'_{\simp{i}}\}_{{\simp{i}}\in K^0}$.  \tcp*{\raggedright Consistently orientated}}
$W_1\gets \verb!sw1!(\verb!approxcechcocyle!(\{\Phi_{\simp{i}} \})$\;
Solve $W_1 = B_0^\intercal \gamma$ for a binary array $\gamma = \{\gamma_{\simp{i}}\}_{\simp{i}\in K^0}$ by Gaussian elimination \;
\tcp*{\raggedright \small{It is possible that no solution exists, in which case the above will fail. This occurs when the bundle is not orientable.}}
\For{$\simp{i} \in K^0$}{
\eIf{$\gamma_{\simp{i}} = 1$}{
$\Phi_{\simp{i}}'\gets \text{Perm}(\Phi_{\simp{i}})$ \tcp*{\raggedright where Perm permutes the first two columns.}
}{
$\Phi_{\simp{i}}'\gets \Phi_{\simp{i}}$\;
}
}

\cprotect\caption{\verb|orientbundle|}\label{alg:orientbundle}

\end{algorithm}

\subsection{Computing the Euler class}

There are many different descriptions of what exactly the Euler class of an oriented vector bundle really is. We described it in Section \ref{Section:Introduction} as the obstruction to constructing a nowhere vanishing section. Alternatively we could describe it as a measure of how \textit{twisted} the bundle is, in the sense that it measures whether we can choose consistent polar coordinates for the fibers on triple intersections of open sets \cite{bott2013differential}. Further, for the specific case of oriented plane bundles, we can interpret it as the obstruction to \textit{lifting} the $SO(2) \simeq S^1$ structure group to $\R$ \cite{scoccola}. Both of the latter characterisations lead to the same procedure for computing the Euler class of plane bundles at the level of \v{C}ech cocycles.






    

    

If $E^2\rightarrow K$ is a real plane bundle which we are given as an exact discrete cocycle $\{\Omega_{\simp{ij}}\}$ over $K$, then, for each $2$-simplex (triangle) $\simp{ijk}\in K^2$ we know that $\Omega_{\simp{ij}}\Omega_{\simp{jk}} = \Omega_{\simp{ik}}$. But the $SO(2)$ matrices $\Omega_{\simp{ij}}$ are naturally isomorphic to the circle $S^1$ by the relation 

\begin{equation}
   \Omega_{\simp{ij}} =  \begin{pmatrix}
       \cos\Theta_{\simp{ij}}& -\sin\Theta_{\simp{ij}}\\\sin\Theta_{\simp{ij}}& \cos\Theta_{\simp{ij}}
   \end{pmatrix}\,,
\end{equation} where the angles $\Theta_{\simp{ij}}$ are specified by a lifting of $S^1\rightarrow \R$, the choice of which is arbitrary. So the cocycle condition requires that $\Theta_{\simp{ij}}+\Theta_{\simp{jk}} = \Theta_{\simp{ik}}+2\pi e_{\simp{ijk}}$ where $e_{\simp{ijk}}\in \Z$. Conventionally, we rewrite this as
\begin{equation}
    e_{\simp{ijk}} = \frac{\Theta_{\simp{ij}}- \Theta_{\simp{ik}} + \Theta_{\simp{jk}}}{2\pi}\,.
\end{equation}

This implies that when we apply the transition functions in a loop, repeatedly changing basis in the fiber, we will return to the representation that we started with after performing a whole number of rotations. $e_{\simp{ijk}}$ just counts the number of rotations we perform. $e_{ijk}$ is clearly an assignment of an integer to each $2$-simplex and so defines a simplicial cochain in $e\in C^2(K,\Z)$. Further, it can be shown that $e$ is a cocycle and actually represents the Euler class of the vector bundle \cite{scoccola,bott2013differential}.


For the case of an $\epsilon$-approximate discrete vector bundle the above discussion almost holds. The only issue is that the cocycle condition is only satisfied up to $\epsilon$ in the Frobenius distance so we cannot just directly equate the angles. Scocolla and Perea showed that as long as the error in the bundle $\epsilon \leq 1$ then $(\Theta_{\simp{ij}}- \Theta_{\simp{ik}} + \Theta_{\simp{jk}})/2\pi$ will be sufficiently close to an integer that we can just round it to the nearest one. Further, they show that when $K$ is the nerve of a cover it reproduces the Euler class of the cover if the cocycle error is sufficiently small. For the exact meaning of ``sufficiently small'' we refer to the original publication \cite{scoccola}. 

\begin{algorithm}
\setstretch{1.15}
\SetAlgoLined
\DontPrintSemicolon
\KwData{An array of $V(k;\R)$ matrices $\{\Phi_{\simp{i}}\}_{\simp{i}\in K^0}$.\tcp*{\raggedright Assumed globally oriented}}
\KwResult{An integer array $\{e_{\simp{ijk}}\}_{\simp{ijk}\in K^2}$. \tcp*{\raggedright Representing $[e]\in H^2(K,\Z)$}}
$\{\Omega_{\simp{ij}}\}_{\simp{ ij }\in K^1}\gets \verb!approxcechcocyle!(\{\Phi_{\simp{i}} \})$\;
\For{$\simp{ij}\in K^1$}{
$\Theta_{\simp{ij}} \gets \arctan\left(\frac{[\Omega_{\simp{ij}}]_{21}}{[\Omega_{\simp{ij}}]_{11}}\right)$ \tcp*{Construct an array of rotation angles}
}
\For{$\simp{ ijk} \in K^2 $}{
    $\Delta\Theta = \text{Round}\left(\frac{\Theta_{\simp{ij}}-\Theta_{\simp{ik}}+\Theta_{\simp{jk}}}{2\pi}\right)$\;
    $e_{\simp{ijk}} \gets \Delta\Theta$\;
}
\cprotect\caption{\verb|eu|}\label{alg:eu}
\end{algorithm}

A pseudocode description of the algorithm for computing the Euler class is provided as Algorithm \verb!eu!. 

\subsection{Computing the Chern numbers}

Given a representative of the first Chern class $c_1$, we recall that the Chern number $\text{Ch}_1$ is defined by evaluated $c_1$ against the fundamental class of the base space $[\mu_K]\in H_2(K,\Z)$. That is
\begin{equation}
    \text{Ch}_1 = \pair{c_1}{\mu_K}\,,
\end{equation}
where we note that the above is in $\Z$ since $c_1\in C^2(K,\Z)$ is a 2-cochain and so pairs with a 2-chain to return an integer. We are implicitly assuming here that $K$ is a simplicial 2-complex. This is not strictly necessary but is the relevant case for our TIM counting problem.

Computing $C_1$ therefore requires obtaining a representative of the fundamental class. Recall that $[\mu_K]$ is defined to be the class generating $H_2(K,\Z) = \Z$, and hence a representative $\mu_K\in C^2(K,\Z)$, written as an integer valued array $\mu_K = \{\mu_{\simp{ijk}}\}_{\simp{ijk}\in K^2}$, will satisfy the equation
\begin{equation}
    B_2\mu_K = 0\,,
\end{equation}
where $B_2$ refers to the $2$-chain boundary matrix. We can solve this linear equation by integer Gaussian elimination, or more accurately reduction to Smith normal form, but there will be two valid solutions\footnote{Technically, there is an entire $\Z$ family of solutions ($k\mu_{K}$ for $k\in \Z$) and a solver could arrive at any one of them. However, it is always possible to reduce to $\pm \mu_{K}$ by repeatedly dividing though by any integer factors which are shared by all of the elements in the array obtained by the solver.} ($\pm \mu_{\text{K}}$) and our solver may obtain either. Note that, these two solutions correspond to the two possible choices of orientation on the surface $K$. 

\subsection*{Orienting $K$}\label{sec:OrientingK}

We call the array $\mu_{\text{top}}=\pm \mu_K$, which our linear solver outputs, the topological fundamental class to distinguish it from $\mu_K$ which will think of as the geometric fundamental class. They are distinct because, as an embedded simplicial complex, $K$ has a geometric orientation defined on it in terms of outward pointing normal vectors. If we want our topological calculation of $\text{Ch}_1$ to agree with the geometric calculation ($\int_M \mathcal{F}$) then we need to ensure that we are using the geometric orientation. $[\mu_{\text{top}}]$ is a already consistent generator of $H_2$, and therefore is \textit{an orientation}, but it may have the wrong global sign. So, we need a procedure to detect this sign and to correct it if is inconsistent. To do this we need to known how the topological and geometric orientations relate to one another.

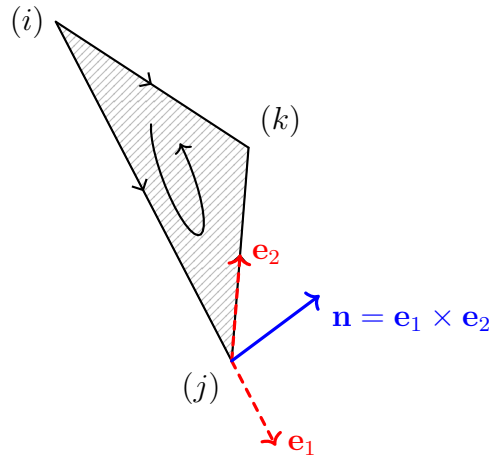
\begin{wrapfigure}{r}{0.4\textwidth}
    \begin{tikzpicture}[line cap=round, line join=round>=stealth,scale=1.2,3d view={140}{30}]

    \coordinate (y1) at (0,0,0);
	\coordinate (y2) at (0,3,-3);
	\coordinate (y3) at (-1.5,1.5,-1.5);

	\path[draw=black, pattern=north east lines, pattern color = black!40, opacity = 0.8] (y1) -- (y2) -- (y3) ;
    \draw [thick, black, ->-] (y1) -- (y2);
    \draw [thick, black, ->-] (y1) -- (y3);
    \draw [thick, black, ->-] (y2) -- (y3);

	\coordinate (n1) at ($(y2) - (y1)$);
	\coordinate (n2) at ($(y3) - (y2)$); 

    \draw [red, very thick, dashed,->] (y2) -- ($(y2)+0.25*(n1)$) node[right] {$\mathbf{e}_1$};
    \draw [red, very thick,dashed,->] (y2) -- ($(y2)+0.5*(n2)$) node[right] {$\mathbf{e}_2$};

    \coordinate (n1crossn2) at (0,1.5,1.5);
    \draw[very thick,blue,->] (y2) -- ($(y2) + (n1crossn2)$) node[below right] {$\mathbf{n} = \mathbf{e}_1\times\mathbf{e}_2$};

    \node[left] (i) at (y1) {$\simp{ i}$};
    \node[below left] (j) at (y2) {$\simp{ j}$};
    \node[above right] (k) at (y3) {$\simp{ k}$};

    \coordinate (c) at ($(y1)+0.5*(n1)+0.5*(n2)$);
    \draw [black,thick,domain=180:450,->] plot ({-0.495-0.15*1.5*sin(\x)}, {1.485+0.15*(3*cos(\x)-1.5*sin(\x))},{-1.485+0.15*(-3*cos(\x)+1.5*sin(\x))});

    \end{tikzpicture}
    \caption{\small{Orientation induced by lexicographical ordering on a 2-simplex.}}
    \label{fig:orientation}
\end{wrapfigure}

Consider a single 2-simplex $\simp{ijk}$ and note that the lexicographical ordering on simplices implies that $i<j<k$. This ordering defines the topological orientation of the simplex. When embedded in $\R^3$ the simplex lies on a plane and a geometric orientation on the simplex amounts to a choice of the direction of the normal vector to the plane. The topological orientation generates a natural choice of normal direction as follows. Let $\mathbf{x}_{\simp{i}},\mathbf{x}_{\simp{j}}$ and, $\mathbf{x}_{\simp{k}}$ be the vertex positions for $\simp{ijk}$. Then, the vectors $\mathbf{e}_1 = \mathbf{x}_{\simp{j}}-\mathbf{x}_{\simp{i}}$ and $\mathbf{e}_2 = \mathbf{x}_{\simp{k}}-\mathbf{x}_{\simp{j}}$ form a basis for the plane of $\simp{ijk}$ and $\mathbf{n} = \mathbf{e}_1\times \mathbf{e}_2$ is a normal vector to the plane. As shown in Figure \ref{fig:orientation} this normal vector is oriented in agreement with the right-hand rule with respect to the $i<j<k$ ordering. The problem then is to determine whether this normal vector is inwards or outwards pointing with respect to the full surface of which $\simp{ijk}$ is only a small section.

This leads us to an interesting computational geometry exercise ``Given a vector $\mathbf{n}\in \R^3$ which is transverse to a compact surface\footnote{For the purposes of computing the $\text{Ch}_1$ of line bundles over spheres, this level of generality is unnecessary. We have included it here highlight that our $\text{Ch}_1$ calculation is possible over any surface. While this is not directly relevant to us, it is relevant as a more general applied topology problem. } $\Sigma\subset \R^3$ and based at a point $\mathbf{x}\in \Sigma$, determine if the vector is inwards or outwards pointing with respect to $\Sigma$''. To answer this, consider the ray $R = \{\mathbf{x}+\mathbf{n} t,t>0\}$ extending outwards from the vector $\mathbf{n}$. It follows that $\mathbf{n}$ is outwards (inwards) pointing if the ray $R$ intersects $\Sigma$ an even (odd) number of times. Therefore we have the that outward pointing vector is given by
\begin{equation}
    \mathbf{n}_{\text{out}} = (-1)^{\# \text{intersections}}\mathbf{n}\,.
\end{equation}

Recall that we are given our surface as a simplicial triangulation $K$. Therefore to count the number of intersections between $K$ and $R$ we just need to check if $R$ intersects each 2-simplex. Note that each 2-simplex $\simp{ijk}$ spans a plane $P_{\simp{ijk}}$ and that $P_{\simp{ijk}}$ and $R$ intersect once and only once\footnote{Assuming that the ray and plane are not parallel which can always be ensured by perturbing the ray.}. Let $\mathbf{w}_{\simp{ijk}}$ be the normal vector to $P_{\simp{ijk}}$ and note that $\mathbf{x}_{\simp{j}}$ is a point on $P_{\simp{ijk}}$. Then the intersection occurs when the displacement between the location along the ray and $\mathbf{x}_{\simp{j}}$ is orthogonal to $\mathbf{w}_{\simp{ijk}}$. That is, where
\begin{equation*}
    \mathbf{w}_{\simp{ijk}}\cdot( \mathbf{x} + t_0\mathbf{n}-\mathbf{x}_{\simp{j}})=0\,,
\end{equation*}
which gives that
\begin{equation}
    t_0 = \frac{\mathbf{w}_{\simp{ijk}}\cdot(\mathbf{x}-\mathbf{x}_{\simp{j}})}{\mathbf{w}_{\simp{ijk}}\cdot\mathbf{n}}\,.
\end{equation}

Finally, we then just need to check if this intersection point $\mathbf{y} = \mathbf{x}+t_0\mathbf{n}$ actually lies within the triangle. This is done by noting that $P_{\simp{ijk}}$ can be thought of as elements of the form $\alpha (\mathbf{x}_{\simp{i}}-\mathbf{x}_{\simp{j}})+\beta (\mathbf{x}_{\simp{k}}-\mathbf{x}_{\simp{j}})$ where the triangle itself is then the subset with $\alpha,\beta\geq0$ and $\alpha+\beta\leq1$. Solving the linear system $\mathbf{y} = \alpha (\mathbf{x}_{\simp{i}}-\mathbf{x}_{\simp{j}})+\beta (\mathbf{x}_{\simp{k}}-\mathbf{x}_{\simp{j}})$ yields $\alpha$ and $\beta$ and hence allows us to determine in the intersection point lies within the triangle. If it does, then we can increment the intersection count otherwise we ignore the $\simp{ijk}$ simplex and move on.

The full procedure described above can be packaged into a single algorithm for constructing the geometric fundamental class from a topological fundamental class. This algorithm has computational complexity $O(\#K^2)$, where $\#K^2$ is the number of 2-simplices, and is presented as \verb!geometricorientation!.

\begin{algorithm}
\setstretch{1.15}
\SetAlgoLined
\DontPrintSemicolon
\KwData{An embedded complex $K\subset \R^3$ and an integer array $\mu_{\text{top}} = \{(\mu_{\text{top}})_{\simp{ijk}}\}_{\simp{ijk}\in K^2}$}
\KwResult{An integer array $\mu_{K} = \{(\mu_{K})_{\simp{ijk}}\}_{\simp{ijk}\in K^2}$}
Search through $\mu_{\text{top}}$ for an $\simp{ijk}$ such that $(\mu_{\text{top}})_{\simp{ijk}}\neq 0$ \;
$\#\text{Intersections}\gets 0$\;
$\mathbf{v} \gets (\mu_{\text{top}})_{\simp{ijk}}\cdot(\mathbf{x}_j-\mathbf{x}_i)\cross (\mathbf{x}_k-\mathbf{x}_j)$ \tcp*{\raggedright Normal vector oriented by $\mu_{\text{top}}$}
\For{$\simp{ lnm} \in K^2\,       \ ( \neq \simp{ ijk }) $}{
    $\mathbf{e}_1 \gets \mathbf{x}_l-\mathbf{x}_n$\;
    $\mathbf{e}_2 \gets \mathbf{x}_m-\mathbf{x}_n$\;
    $\mathbf{w} \gets (-\mathbf{e}_1)\cross \mathbf{e}_2$\;
    $t \gets \frac{\mathbf{w}\cdot(\mathbf{x}_n-\mathbf{x}_j)}{\mathbf{w}\cdot\mathbf{v}}$\;
    \eIf{$t\leq0$ (or undefined)}{
        continue\;
    }{
    $\mathbf{y} \gets t\mathbf{v} + \mathbf{x}_j$\;
    $M\gets [\mathbf{e}_1 \, \   \mathbf{e}_2]$\;
    Solve $\mathbf{y} = M(\alpha, \beta)^T$ for $\alpha,
    \beta$\;
    \If{$\alpha\geq0,\beta\geq0,$ and $\alpha+\beta \leq 1$ }{
        $\#\text{Intersections}\gets \#\text{Intersections} +1$\;
    }
    }
}
$\mu_{K}\gets (-1)^{\#\text{Intersections}}\mu_{\text{top}}$\;
\cprotect\caption{\verb|geometricorientation|}\label{alg:geometricorientation}
\end{algorithm}

\subsection{A minor technical note}\label{subsec:ModpReduction}
This subsection concerns our specific implementation of the algorithms presented above and how they interact with our chosen method of mesh generation. It is largely separate to the rest of this paper and is most relevant to readers interested in numerical applied topology rather than TIMs specifically. Other readers should feel free to skip ahead to our numerical examples in Section \ref{Section:NumericalExperiments}. The one important takeaway is that our implementation calculates $\text{Ch}_1 \Mod{p}$ (for finite prime numbers $p$) rather than the integer $\text{Ch}_1$ itself.

\subsubsection*{Mesh generation by alpha complexes}
In practice, to obtain our triangulation $K$ of a sphere around the Weyl point we adapt tools from Topological Data Analysis (TDA). Specifically, from the sample $X = \{\bm{\lambda_i}\}_{i=1}^N$, of points surrounding the Weyl point, we generate an alpha filtration $A_d(X)$ and then compute its persistent homology $PH(A(X),\Z_p)$ for a chosen finite prime $p$. From the persistent homology we can then read off the range of diameters $d$ over which the alpha complex has the correct homology to be a sphere. We then choose the largest acceptable diameter and take its alpha complex to be our triangulation. This procedure is neither unique nor strictly necessary, but will be sufficient for the purposes of the numerical examples of TIM counting problems presented in Section \ref{Section:NumericalExperiments}. 

There are two advantages to this mesh generation approach. Firstly, it is automatic and is simple to implement in \verb!Julia! because it can be bootstrapped onto the package \verb|Ripserer|, which is the \verb!Julia! language implementation of the Ripser algorithm \cite{Bauer2021Ripser,vcufar2020ripserer}. Secondly, it allows us to include noise in the sample $X$ without altering the TIMs counting calculation. However, there is one major disadvantage, which is that \verb|Ripserer| does not construct the $\Z$ valued boundary matrices $B_0,B_1,B_2,\ldots $ and instead computes their $\Z_p$ reduction\footnote{$\Z_p$ refers to the set of integers taken modulo an integer $p$ and is a field if $p$ is a prime. The $\Z_p$ reduction of a matrix $M$ - or array, in the case of (co)chains - refers to the process of taking each element in the matrix $M_{ij}$ and replacing it with its value modulo $p$, that is with $M_{ij}\Mod{p}$.}, for a chosen prime number $p\geq2$. This is because \verb|Ripserer| is fundamentally a persistent homology package and it turns out that constructing persistent homology requires the coefficients to be taken from a field rather than a ring \cite{ghrist2018homological}.  

The above implies that we cannot solve for $\mu_K$ directly by matrix reduction on $B_2$ but we can obtain its mod $p$ reduction. That is, if $\mu_{K} = \{(\mu_{K})_{\simp{ijk}}\}_{\simp{ijk}\in K^2}$ refers to the integer array representing the true fundamental class, we can obtain the array 
\begin{equation}
    \mu_K \Mod{p} := \{(\mu_{K})_{\simp{ijk}} \Mod{p}\}_{\simp{ijk}\in K^2}\,.
\end{equation}
Performing matrix reduction on the $B_2$ matrix with $\Z_p$ coefficients would produce this array but it can also just be pulled directly from \verb!Ripserer! as the representative cycle of the most persistent $PH_2(A(X),\Z_p)$ class, which is how we obtain our $\mu_K \Mod{p}$.

\subsubsection*{Reduction mod $\Z_p$}

If we were to obtain explicit representations of both $\mu_{K}$ and $c_1$ in the basis of simplicial 2-chains and 2-cochains respectively, then we could calculate $\text{Ch}_1$ as
\begin{equation}\label{C1calc}
    \text{Ch}_1 = \pair{c_1}{\mu_K}\,.
\end{equation}
However, as explained above, in our implementation we don't have access to the integer fundamental class $\mu_K$ but only $\mu_K \Mod{p}$. Hence, we cannot strictly compute $\text{Ch}_1$ using this method\footnote{Since calculating a pairing between a chain and cochain is effectively a dot product between vectors, and dot products only make sense if the two vectors use the same coefficients.}, but we can compute $\text{Ch}_1 \mod{p}$ for every finite prime $p$. All we need to do is take the mod p reduction of our $c_1$ class and then calculate 
\begin{equation}
    \text{Ch}_1 \Mod{p} = \pair{c_1 \Mod{p}}{\mu_K \Mod{p}} = (c_1 \Mod{p})^\intercal(\mu_K \Mod{p})\,.
\end{equation}
Calculating $c_1 \Mod{p}$ is straightforward since the algorithm \verb!eu! outputs the Chern class as an integer array $c_1 = \{(c_1)_{\simp{ijk}}\}_{\simp{ijk}\in K^2}$ which we can just reduce element-wise mod $p$ to obtain $c_1 \Mod{p} = \{(c_1)_{\simp{ijk}} \Mod{p}\}_{\simp{ijk}\in K^2}$.

Using this method, exactly determining $\text{Ch}_1$ would require redoing this calculation for every prime but it is usually sufficient to just perform the calculation for a few small primes. Certainly, for the purposes of showing a bundle is non-trivial it is sufficient to just show that $\text{Ch}_1$ is not zero mod a single prime. Otherwise, we can consider the set of primes in order and calculate $\text{Ch}_1\Mod{p}$ for each of them. When the prime is larger than $\text{Ch}_1$ this calculation will stabilise at the correct value. Taking the standard lift of $\Z_p$ to $\Z$ gives our result for $\text{Ch}_1$.

\section{Numerical examples}\label{Section:NumericalExperiments}

To demonstrate the algorithms discussed above we will apply them to a few standard examples from the theory of vector bundles and topological waves. Firstly, we will check that algorithm for computing $C_1$ correctly calculates this invariant for the tautological line bundle over the complex projective space. Then we will consider two examples of topological edge modes and show that the $C_1$ calculation can be performed numerically and correctly predicts the number of edge modes. 

Let us demonstrate the application of the above algorithms to three examples: one mathematical example of an abstract Chern number calculation and two examples of TIM counting in physical problems. In each example, we  apply the full algorithm described in Section \ref{sec:TopologicalInfo}. 

\subsection{The tautological line bundle over the Riemann Sphere}

As a check on the validity of our calculation of $\text{Ch}_1$ for a complex line bundle we can confirm that the normalisation axiom of the Euler class is satisfied, which implies that $\text{Ch}_1=-1$ for the tautological complex line bundle. Specifically, consider $\C P^1$ which is defined as the set of complex lines in $\C^2$. The tautological line bundle $\gamma^1\rightarrow \C P^1$ is the line bundle defined by assigning each complex line in $\C P^1$ to itself. We can make an explicit representation of this line bundle using the Riemann sphere. 

Choose any point in $(z_1,z_2)\in \C^2$, and note that this point resides on the complex line of points of the form $\lambda(z_1,z_2)$ with $\lambda\in \C$. As long as $z_2\neq0 $, we can choose $\lambda = z_2^{-1}$ and hence find a representative of the same complex line of the form $(z_1,1)\in \C^2$. This continuously maps $\C P^1$ onto $\C\cup \{\infty\}$  if we also send the complex line $(z_1,0)$ sent to the point at infinity. We can then send $\C\cup \{\infty\}$ to the Riemann sphere by stereographic projection.  

Now, we can define an explicit representation for the tautological line bundle over the Riemann sphere. Let $(\theta,\phi)$ be the usual spherical coordinates, then performing an orientation preserving stereographic projection to $\C\cup\{\infty\}$ before mapping this point to the associated normalised $\C^2$ vector gives a map $\chi:S^2\rightarrow V(1;\C)$,

\begin{equation}\label{TautLineFiber}
    \chi(\theta,\phi) = \frac{1}{\sqrt{4\tan^2\left(\frac{\pi-\theta}{2}\right)+1}}\begin{pmatrix}
        2\tan\left(\frac{\pi-\theta}{2}\right)e^{-i\phi}\\1
    \end{pmatrix}\,.
\end{equation}
At $\theta=0$ this map is singular since it approaches
\begin{equation}
    \chi \longrightarrow \begin{pmatrix}
        e^{-i\phi}\\0
    \end{pmatrix}\,,
\end{equation}
the phase of which varies with $\phi$. This is an example of a phase singularity in a complex vector bundle. For our purposes here the singularity is not an issue as we can choose our sample of points to not include the north pole.

To compute $\text{Ch}_1$ for this bundle we follow the approach described in Section \ref{sec:TopologicalInfo} and apply the algorithms discussed in Section \ref{Section:EulerClass} in turn.

That is, first we choose a random sample of 200 points\footnote{The specific choice of ``200'' here is arbitrary. Any number of points will suffice as long as the error in the associated approximate discrete vector bundle is small enough. We have observed, heuristically, that $\gtrapprox 50$ points is usually sufficient.} $\{\bm{\lambda}_i\in \R^3\}_{i= 1}^{200}$ from the sphere and use them as the vertices to generate a simplicial mesh $K$. Then, we compute the polar coordinates $\{(\theta_i,\phi_i)\}_{i=1}^{200}$ of the vertices so that we to construct a complex line bundle by assigning the $\C^2$ vector given by \eqref{TautLineFiber} to each point. Specifically, we assign the Stiefel matrix $\Phi_{\simp{i}} = [\gamma(\chi(\theta_i,\phi_i)),\gamma(J\chi(\theta_i,\phi_i))]$ to each vertex $\simp{i}\in K^0$. This assignment defines a local trivialisation of the approximate discrete vector bundle over $K$ associated to $\gamma^1\rightarrow \C P^1$. 

Next, we pass the array of Stiefel matrices $\Phi$ to \verb!approxcechcocycle! to compute an array $\Omega = \{\Omega_{\simp{ij}}\}_{\simp{ij}\in K^1}$ of $SO(2)$ matrices representing a discrete \v{C}ech cocycle for this bundle. We then pass $\Omega$ to \verb!eu! to obtain an integer array $c_1 = \{(c_1)_{\simp{ijk}}\}_{\simp{ijk}\in K^2}$ representing the first Chern class of $\gamma^1\rightarrow \C P^1$. 

We then choose\footnote{The need for this choice is discussed in Section \ref{subsec:ModpReduction}.} a finite prime $p$ and solve for the topological fundamental class $\mu_{\text{top}} \Mod{p}$ of our mesh $K$ as a $\Z_p$ valued array. This array can then be passed \verb!geometricorientation! to produce the geometrically oriented fundamental class $\mu_K \Mod{p}$ as a $\Z_p$ array.

Finally, we calculate Chern number mod $p$. For the case of $p=3$ this gives
\begin{equation}
    \text{Ch}_1 \Mod{3} = \pair{c_1\Mod{3}}{\mu_{K}\Mod{3}}  = 2 \Mod{3}\,.
\end{equation}
But $2=-1 \Mod{3}$ and so this does not contradict the normalisation axiom of the Euler class. Repeating this calculation for several different finite primes $p= 3,5,7,\ldots$ we repeatedly obtain $\text{Ch}_1 = -1 \Mod{p}$ and so confirm that $\text{Ch}_1=-1$.

\subsection{TIM counting examples}\label{subsec:TIMsCountingExamples}

We will now verify two spectral flow calculations from the literature on TIMs in fluids and plasmas. Specifically we will consider the Shallow Water (SW) equations and the Topological Langmuir Cyclotron Wave (TLCW). Both of these calculations proceed in a similar fashion. We have a state of $k$ fields $\ket{\psi}$ and a Hamiltonian $\hat{H}$ modelling waves near an interface in an effectively 2D system. We use $x$ and $y$ to refer to the direction across the interface and along the interface respectively. We assume the system is homogenous in $y$ and so can consider waves propagating with wavevector $k_y$ along the interface independently. Treating $k_y$ as a spectral flow parameter in $\hat{H}$ we seek to calculate the net spectral flow $\Delta \mathcal{N}_n$ into the $n$'th band. Recalling the generalised SFM correspondence \eqref{GeneralisedSFM} we have 
\begin{equation}
    \Delta \mathcal{N}_n = -\text{Ch}_1^n\,,
\end{equation}
where $\text{Ch}_1^n$ is the Chern number of the local polarisation bundle of the Weyl symbol $H$, associated to the $n$'th band, taken over a sphere around a degeneracy point\footnote{Here we are assuming that there is only one Weyl point in the phase space. This will not always be the case but is true for the examples we consider and is a helpful simplification.}.

The approach to compute the spectral flow for the SWE and TLCW problems is effectively the same but with differing Hamiltonians and hence different Weyl symbols. The procedure is as follows:
\begin{enumerate}
    \item From the Hamiltonian determine the Weyl symbol $H(x,k_x,k_y)$ using the operator-symbol correspondence..
    \item Determine the location of the isolated degeneracy point $(\tilde{x},\tilde{k}_x,\tilde{k}_y)$. That is, the point where two eigenvalues of $H$ collide.
    \item Choose a sample of $N\gtrapprox 50$ points\footnote{As discussed above, the chosen number of points in arbitrary but we have found that between $\approx50$ and $\approx 200$ is usually sufficient.} $X = \{(x^i,k_x^i,k_y^i)\}_{i=1}^N$ from a uniform distribution over a small sphere surrounding the degeneracy point. 
    \item Generate a simplicial mesh $K$ using the points of $X$ as the vertices of the mesh.
    
    \item At each $(x^i,k_x^i,k_y^i)\in X$ compute the eigenvalues and eigenvectors of the Weyl symbol $H(x^i,k_x^i,k_y^i)$ and sort them by increasing eigenvalue.
    
    Call the normalised eigenvector associated to the $n$'th eigenvalue $\chi_i^n$. 
    
    Note that we perform all of the following steps for each $n$ since each branch in $H$'s spectrum defines a different local polarisation bundle.
    \item  Generate from these vectors an approximate discrete local trivialisation $\Phi^n$ of the associated rank 2 real vector bundle. 
    
    That is, construct the array of $V(2;\R)$ Stiefel matrices $\Phi^n = \{ \gamma(\chi_i^n),\gamma(J\chi_i^n)\}_{i \in 1}^N$
    
    \item Compute the discrete \v{C}ech cocycle of this bundle $\Omega^n = \verb!approxcechcocycle!(\Phi^n)$, which is an array $\Omega^n = \{\Omega_{\simp{ij}}^n\}_{\simp{ij}\in K^1}$ of $SO(2)$ matrices indexed by the edges of $K$.
    \item Compute the integer array $e^n = \{e^n_{\simp{ijk}}\}_{\simp{ijk}\in K^2} = \verb!eu!(\Omega^n)$ representing the Euler class, which is also the Chern class.
    \item Choose a prime number $p$ and solve for a null vector $\mu_{\text{top}} \Mod{p}$ of the boundary matrix $B_2 \Mod{p}$.
    \item Compute the geometric fundamental class $\mu \Mod{p} = \verb!geometricorientation!(\mu_{\text{top}}\Mod{p})$, stored  as a $\Z_p$ array $\mu = \{\mu_{\simp{ijk}}\Mod{p}\}_{\simp{ijk}\in K^2}$
    \item Calculate $\text{Ch}_1^n \Mod{p} = \sum_{\simp{ijk}\in K^2}(\mu_{\simp{ijk}}\Mod{p})\cdot (e^n_{\simp{ijk}}\Mod{p})$.
    \item Repeat $9\rightarrow 11$ for several different primes.
\end{enumerate}

\subsubsection{The shallow water equations }

We deal first with the case of the SW equations which is a local model of the flow of water on the surface of a rotating sphere near its equator\footnote{The global case is more complicated and has been investigated by Perez \cite{perez2025topology}.} \cite{venaille2023ray,delplace2017topological}. The Hamiltonian for the SW equations, assuming plane waves along the equator, is 
\begin{equation}
    \hat{H} = \begin{pmatrix}
        0 & -if(\hat{x}) & \hat{k}_x\\
        if(\hat{x}) & 0 & k_y\\
         \hat{k}_x & k_y & 0
    \end{pmatrix}\,,
\end{equation}
where: $\hat{x}$ is the position operator, $\hat{k}_x = -i\partial_x$ is the wavenumber operator, $f$ is a function which specifies our model of the Coriolis force, and $k_y$ is a spectral flow parameter corresponding to the wavevector along the interface. This model is known to have three bands: positive Poincare modes, negative Poincare modes, and Rossby waves. It also exhibits a spectral flow of $\Delta \mathcal{N} = +2$ into the positive Poincare band. These spectrally flowing dispersion curves correspond to modes uni-directionally propagating equatorial waves called the Kelvin and Yanai waves \cite{delplace2017topological}. 

We will assume that $f(0)=0$ and that $f$ is monotonically increasing so that we can change coordinates and use $(f,k_x)$ as our phase space coordinates\footnote{We could, equivalently, Taylor-expand $f$ near $0$, keeping only the linear term and then rescale the coordinates.}. We can then read off the Weyl symbol as 
\begin{equation}
    H(f,k_x,k_y) = \begin{pmatrix}
        0 & -if & k_x\\
        if & 0 & k_y\\
         k_x & k_y & 0
    \end{pmatrix}\,.
\end{equation}

The Weyl symbol has three eigenvalues $\omega_{\pm}(f,k_x,k_y) = \pm\sqrt{f^2+k_x^2+k_y^2}$ and $\omega_0 = 0$ which correspond to the positive and negative Poincare modes, and the Rossby waves respectively. These define three eigenbundles which we call $E_\pm$ and $E_0$. There is a single degeneracy at $(0,0,0)$, where all three eigenvalues coincide. The SFM correspondence tells us that the net spectral flow into a band will be equal to the Chern number associated to a sphere surrounding this point. Following the procedure discussed above we computed the Chern number of this point for each branch in the dispersion relation modulo several different primes. The results of this calculation are presented in Table \ref{tab:SWEquationsChernNumbers}. We note that the Chern number of the positive Poincare $+$ band was found to be $\text{Ch}^+_1=-2$. This agrees with the result of analytic calculations in this model \cite{delplace2022berry} and with the observed $+2$ spectral flow into this band.  

\begin{table}[t]
    \centering
    \begin{tabular}{c|c|c|c|c}
        Branch & $\text{Ch}_1 \Mod{3}$ & $\text{Ch}_1 \Mod{5}$ & $\text{Ch}_1 \Mod{7}$ & $\text{Ch}_1$ \\\hline\hline
        $-$ & 2 & 2 & 2 & 2\\\hline
        $0$ & 0 & 0 & 0 & 0\\\hline
        $+$ & 1 & 3 & 5 & -2
    \end{tabular}
    \caption{Results of computing the Chern numbers of the Weyl point of the SW equations by the algorithm described in Section \ref{subsec:TIMsCountingExamples}.}
    \label{tab:SWEquationsChernNumbers}
\end{table}

\subsubsection{The topological Langmuir cyclotron wave}

The SW equations are a particularly simple case in the sense that the Weyl symbol of the Hamiltonian can be written as an element of an $\mathfrak{su}(2)$ subalgebra \cite{delplace2022berry}. To test our algorithm on a more complicated problem we now investigate an example from plasma physics, the Topological Langmuir Cyclotron Wave (TLCW). This is a TIM which emerges at a density boundary in an under-dense magnetised plasma and is created by the Weyl point where the Langmuir and cyclotron branches intersect. The TLCW was investigated by Fu and Qin \cite{qin2023topological} who observed a spectral flow between the Langmuir and cyclotron bands numerically and analytically calculated the Chern number using a local model near the Weyl point. They found a spectral flow of $\Delta \mathcal{N} = -1$ for a density wall with positive slope\footnote{Fu and Qin \cite{qin2023topological} choose a negative density slope in their original work. This flips the signs in all of their spectral flow and Chern number calculations and so what they call a +1 spectral flow is a -1 spectral flow in the more standard convention we are adopting here.} and so we expect a Chern number of $C_1 = +1$ for the higher frequency of the two bands. We can now confirm this using our numerical method.

The TLCW is an edge mode in a linearised cold magnetised plasma with stationary ions. There are three relevant vector fields: $\mathbf{v},\mathbf{E},$ and $\mathbf{B}$ representing the electron fluid velocity, electric field, and magnetic field respectively. The full state $\ket{\psi}$ is then comprised of the nine component fields. We assume that there is a constant background magnetic field in the $z$ direction. The Weyl symbol, after nondimensionalisation, for this model is written in block matrix form as 
\begin{equation}
    H(\mathbf{x},\mathbf{k}) = \begin{pmatrix}
        i \bm{e}_z\times & -i w_p(\mathbf{x}) & 0\\
        iw_p(\mathbf{x}) & 0 & -\mathbf{k}\times\\
        0 & \mathbf{k}\times & 0
    \end{pmatrix}\,,
\end{equation}
where: $\bm{e}_z$ refers to the unit vector in the $z$ direction, scalars are implied to be multiplied by $\Id_3$, and the $\times$ symbol refers to antisymmetric matrix which implements the cross product:
\begin{equation}
    \mathbf{b}\times = \begin{pmatrix}
        0 & -b_z & b_y\\
        b_z & 0 & -b_x\\
        -b_y & b_x & 0
    \end{pmatrix}\,.
\end{equation}
 Here $w_p(\mathbf{x}) = \sqrt{\frac{4\pi n(\mathbf{x}) e^2}{m_e\Omega^2}}$ refers to the plasma frequency normalised by the electron cyclotron frequency. We focus on the case of a one inhomogenous dimension $w_p(\mathbf{x}) = w_p(x)$ and therefore can Fourier-transform along both $y$ and $z$, the homogenous directions. This leaves an effectively four dimensional parameter space $(x,k_x,k_y,k_z)$ and so the SFM correspondence does not immediately apply. However, we can fix $k_z$ and then will be left with only a three dimensional parameter space $(x,k_x,k_y)$ in which to find our Weyl point. For a given $k_z$, the Weyl point occurs where $(k_x,k_y)= 0$ and at the frequency
\begin{equation}
    w_p(x) = w_{pc} = \frac{\sqrt{k_z^4+4k_z^2}-k_z^2}{2}\,.
\end{equation}

As before, we will assume that near to the Weyl point the $\omega_p$ profile is injective so we can use $\omega_p$ instead of $x$ as the coordinate of our space. Constructing a radius 0.1 sphere around $(\omega_p,k_x,k_y) =(\omega_{pc},0,0)$ in phase space we have a set of nine eigenbundles $E_i$ for the Weyl symbol $H$: a central $\omega=0$ mode $E_0$ which we are not concerned with; four positive frequency modes $E_1,\ldots E_4$; and four symmetric negative frequency modes $E_{-4}\ldots E_{-1}$. Applying the algorithm explained in Section \ref{subsec:TIMsCountingExamples} to calculate $\text{Ch}^n_1$ for each of these branches we obtained the results presented in Table \ref{tab:TLCWEquationsChernNumbers}. Note that between the $E_1$ and $E_2$ bands we find Chern numbers of $-1$ and $+1$ respectively, indicating the presence of a net $-1$ spectral flow from band $E_1$ into $E_2$ as $k_y$ is varied. This topological edge mode is the TLCW, and the result of $\text{Ch}^1_1 = -1$ agrees with the analytic results of Fu and Qin \cite{qin2023topological}. 

\begin{table}[t]
    \centering
    \begin{tabular}{c|c|c|c|c}
        Branch & $\text{Ch}_1 \Mod{3}$ & $\text{Ch}_1 \Mod{5}$ & $\text{Ch}_1 \Mod{7}$ & $\text{Ch}_1$ \\\hline\hline
        $-2$ & 2 & 4 & 6 & -1\\\hline
        $-1$ & 1 & 1 & 1 & 1\\\hline
        $0$ & 0 & 0 & 0 & 0\\\hline
        $1$ & 2 & 4 & 6 & -1\\\hline
        $2$ & 1 & 1 & 1 & 1
    \end{tabular}
    \caption{Results of computing the Chern numbers of the Weyl point of the TLCW the algorithm described in Section \ref{subsec:TIMsCountingExamples}. The $E_{-4},E_{-3},E_{3},E_{4}$ branches are all ignored here because all of their Chern numbers were zero.}
    \label{tab:TLCWEquationsChernNumbers}
\end{table}

\section{Extraction of Chern numbers from data}\label{Section:DataAnalysis}

Since the algorithm for computing $\text{Ch}_1$ which was presented above requires only a finite sample of the eigenvectors surrounding the Weyl point, it is also applicable to experimental data. Consider the case of a TIMs problem, which as discussed above, amounts to solving a vector PDE of the form 

\begin{equation}\label{Schrodinger}
    \omega \ket{\Psi(\omega,k_y)} = \hat{H}(\hat{x},\hat{k}_x,k_y)\ket{\Psi(\omega,k_y)}\,.
\end{equation}
Suppose that we have experimentally measured $\ket{\Psi(\omega,k_y)}$ and aim to extract $\text{Ch}_1$ from this data for a given Weyl point\footnote{We can interpret this as a test on the validity of our model Hamiltonian. If the measured $\text{Ch}_1$ disagrees with the calculation from the $H(x,k_x,k_y)$ symbol then that is evidence that our model is incomplete.}. 

To perform such an extraction, we will make some assumptions upon the experimental regime. Firstly, the measured data is assumed to describe a quasi-scalar wave. That is, there exists some vector pseudodifferential operator $\hat{\chi}$ such that $\ket{\Psi} = \hat{\chi}\ket{\psi}$, where $\ket{\psi}$ represents a scalar field. We also assume that the resolution is high enough, and the size of the measured domain large enough relative to the variation scale of $\hat{H}(\hat{x},\hat{k}_x,k_y)$ that we can perform our analysis in the geometrical gptics (GO) limit\footnote{The exact requirements for this to hold will depend upon the experimental setup in question.}. Under this construction, it can be shown that, at leading order in GO, \eqref{Schrodinger} requires that the symbol $\chi(x,k_x,k_y)$ of $\hat{\chi}$ be an eigenvector of $H(x,k_x,k_y)$ with eigenvalue $\omega$ (for details on this construction see the paper by Venaille \cite{venaille2023ray}). So if we can extract the operator $\hat{\chi}$ from the solution $\ket{\Psi(\omega,k_y)}$ we obtain the eigenvectors of $H(x,k_x,k_y)$ at each point in phase space and can then assemble a discrete vector bundle.

\subsection{General formulation}\label{Datageneral}

The core question is how to extract $\hat{\chi}$ from the measured fields. In principle, this can be done using the Wigner matrix. Assuming that our fields are quasi-scalar waves, their density matrix can be represent as 
\begin{equation}
    \hat{\rho} = \ket{\Psi}\bra{\Psi} = \hat{\chi}\ket{\psi}\bra{\psi}\hat{\chi}^\dagger\,,
\end{equation}
The Weyl symbol of this operator is a matrix valued function called the Wigner matrix $W$ of the state $\ket{\Psi}$ \cite{folland2016harmonic}. At leading order in GO we have that
\begin{equation}
    W(x,k_x,k_y) = \chi W_\psi\chi^\dagger\,,
\end{equation}
where the terms on the right refer to the Weyl symbols of $\hat{\chi}$ and $\ket{\psi}\bra{\psi}$, which we think of as a scalar Wigner function $W_\psi$. Since $W_\psi$ is scalar we can rearrange the above as 
\begin{equation}\label{AsymptoticWigner}
    W(x,k_x,k_y) = W_\psi\chi\chi^\dagger\,,
\end{equation}
from which we conclude that at each point in phase space the Wigner matrix is a projector onto the local eigenvector of $H$. 

So, suppose we are given an experimentally measured $\ket{\Psi}$, under the same assumptions as above, we can compute its Wigner matrix by taking the full Wigner transform
\begin{equation}
    W(x,k_x,k_y) = \frac{1}{2\pi}\int \Psi(x+s/2,k_y)\Psi^\dagger(x-s/2,k_y)e^{-ik_xs}\, ds\,.
\end{equation}
Due to experimental noise - and the fact that \eqref{AsymptoticWigner} is only an asymptotic equality - we do not expect that this $W$ will satisfy \eqref{AsymptoticWigner} exactly but instead can assume that there exists normalised vectors $\chi$'s such that
\begin{equation}
    W \approx W_\psi \chi\chi^\dagger\,.
\end{equation}
Therefore, our experimentally measured Wigner function should be ``almost a projector''. By seeking the closest projector to $W$ at each point in phase space we can obtain the eigenvectors $\chi$ from which the vector bundle needed to calculate $\text{Ch}_1$ is assembled. This implies that $\chi$ is found as
\begin{equation}\label{minimisation}
    \chi(x,k_x,k_y) = \argmin{\xi}||W(x,k_x,k_y) - \xi \xi^\dagger)||\,,
\end{equation}
where the norm on matrices is taken to be the $L_2$ operator norm and we implicitly normalise the solution vector. Since the Wigner matrix is, by construction, Hermitian one can search for $W$ in the form $W = \sum_i\lambda_ie_ie_i^\dagger$, where $\lambda_i$ and $e_i$ are the eigenvalues and orthonormalised eigenvectors of $W$ respectively.  If we order the eigenvectors by decreasing eigenvalue then the solution to \eqref{minimisation} is  
\begin{equation}\label{chiestimate}
    \chi = e_1\,.
\end{equation}
In other words $\chi$ is the normalised eigenvector of $W$ corresponding to the largest eigenvalue.

The above construction includes a natural measurement of its error which we can use to gauge the validity of our quasi-scalar assumption. At each point in phase space we can compute 
\begin{equation}\label{errorinW}
    \sigma(x,k_x,k_y) = ||\tilde{W}^2 - \tilde{W}||\,,
\end{equation}
where $\tilde{W}=W/||W||$. To the extent that \eqref{AsymptoticWigner} holds, $\tilde{W}$ will be a projector and therefore $\tilde{W}^2=\tilde{W}$, so $\sigma$ will be zero. Therefore, letting the GO parameter be $\varepsilon$ we can say that our approximation is accurate, and hence our of $\chi$ is reliable only where 
\begin{equation}
    \sigma(x,k_x,k_y) \ll \varepsilon\,.
\end{equation}

\subsection{Example: 2D Dirac equation}

To test the above method, let us consider fields governed by the 2D Dirac equation
\begin{equation}\label{DiracEquation}
    i\partial_t \begin{pmatrix}
        \psi_1\\\psi_2
    \end{pmatrix}= \begin{pmatrix}
        x & -i\partial_x -ik_y\\
        -i\partial_x+ik_y & -x
    \end{pmatrix}\begin{pmatrix}
        \psi_1\\\psi_2
    \end{pmatrix}\,,
\end{equation}
which describes two complex fields $\psi_1,\psi_2$ which vary over 1D spacetime $(x,t)$ with a spectral-flow parameter $k_y$, representing the component of the wavenumber directed along the boundary. 

\begin{rmrk}
    The spectrum of \eqref{DiracEquation} was presented in Section \ref{Section:Introduction}, where we observed that there is an $\omega>0$ band, an $\omega <0$ band and a TIM generating a net $\Delta\mathcal{N} = 1$ spectral flow into the positive band. This implies, from the SFM correspondence \eqref{SFMCorrespondence}, that the Chern number of the local polarisation bundle associated to the $\omega>0$ band should be $\text{Ch}_1 =  -1$.
\end{rmrk}

This model is advantageous because \eqref{DiracEquation} can be solved exactly in the Hermite basis \cite{delplace2022berry}. The solution is as follows. Let us define two new fields $\phi_1,\phi_2$ by 
\begin{equation}
    \begin{pmatrix}
        \phi_1\\\phi_2
    \end{pmatrix} = e^{-i\frac{\pi}{4}\sigma_x}\begin{pmatrix}
        \psi_1\\\psi_2
    \end{pmatrix}\,,
\end{equation}
where $\sigma_x$ is the first Pauli matrix. This set of rotated fields satisfies the PDE 
\begin{equation}\label{DiracRotated}
        i\partial_t \begin{pmatrix}
        \phi_1\\\phi_2
    \end{pmatrix}= \begin{pmatrix}
        k_y & i\hat{a}^\dagger\\
        -i\hat{a} & -k_y
    \end{pmatrix}\begin{pmatrix}
        \phi_1\\\phi_2
    \end{pmatrix}\,,
\end{equation}
where we have introduced the raising and lowering operators $\hat{a} = \hat{x}+i\hat{k}$ and  $\hat{a}^\dagger = \hat{x}-i\hat{k}$.

Let us introduce the basis of standard Hermite functions
\begin{equation}
    f_n(x) = \frac{1}{\sqrt{2^n n!}}\left(\frac{1}{\pi}\right)^{1/4}e^{-x^2/2}H_n(x)\,,
\end{equation}
which satisfy 
\begin{align}
    \hat{a}f_n &= \sqrt{2n}f_{n-1}\,,\\
    \hat{a}^\dagger f_n &= \sqrt{2(n+1)}f_{n+1}\,.
\end{align}
Using these one can write the general solution to \eqref{DiracRotated} as
\begin{align}
    \begin{pmatrix}
        \phi_1\\\phi_2
    \end{pmatrix} &= \gamma \begin{pmatrix}
        f_0(x)\\0
    \end{pmatrix}e^{-ik_y t}\\ &+\sum_{n=1}^\infty \frac{1}{\sqrt{2\omega_n}}\left(\alpha_n\begin{pmatrix}
        \sqrt{\frac{2n}{\omega_n-k_y}}f_n(x)\\-i\sqrt{\omega_n-k_y}f_{n-1}(x)
    \end{pmatrix}e^{-i\omega_n t}+\beta_n\begin{pmatrix}
        \sqrt{\frac{2n}{\omega_n+k_y}}f_n(x)\\i\sqrt{\omega_n+k_y}f_{n-1}(x)
    \end{pmatrix}e^{i\omega_n t}\right)\,,\nonumber
\end{align}
where $\omega_n = \sqrt{k_y^2+2n}$, and $\gamma,\alpha_n,$ and $\beta_n$ are parameters which depend on the initial conditions. The first term, which corresponds to $\omega=k_y$, is the TIM. 

Assuming the field is monochromatic in time with frequency $\omega>\sqrt{2}$, its spatial spectrum in $y$ consists of discrete wavenumbers $k_y = k_y^{(n)}$ where
\begin{equation}\label{Quantisedky}
    k_y^{(n)}:= \pm \sqrt{\omega^2-2n}\,,
\end{equation}
and $n$ is restricted to integers from $1$ to $\left \lfloor{\omega^2/2}\right \rfloor $. The corresponding mode profiles (for the positive frequency band) are 
\begin{equation}
    \begin{pmatrix}
        \phi_1^n(x)\\\phi_2^n(x)
    \end{pmatrix} =  \frac{1}{\sqrt{2\omega_n}}\alpha_n\begin{pmatrix}
        \sqrt{\frac{2n}{\omega_n\pm k_y^n}}f_n(x)\\-i\sqrt{\omega_n\pm k_y^n}f_{n-1}(x)
    \end{pmatrix}\,. 
\end{equation}

The corresponding Wigner matrix 
\begin{equation}\label{DiracWigner}
    W^n_{ij}  = \frac{1}{2\pi}\int \phi_i(x+s/2)\phi_j^*(x-s/2)e^{-ik_xs} ds\,,
\end{equation}
can be calculated analytically as follows. First of all, notice that it can be expressed through two functions 
\begin{equation}
    g_n(x,k_x) = \frac{1}{2\pi}\int f_n(x+s/2)f_n^*(x-s/2)e^{-ik_xs}ds\,,
\end{equation}
and 
\begin{equation}
    h_n(x,k_x) = \frac{1}{2\pi}\int f_n(x+s/2)f_{n-1}^*(x-s/2)e^{-ik_xs}ds\,.
\end{equation}
The expression for $g_n$ is well known \cite{folland2016harmonic}: 
\begin{equation}
    g_n(x,k_x) = \frac{(-1)^n}{\pi}e^{-(x^2+k_x^2)}L_n(2(x^2+k_x^2))\,.
\end{equation}
The expression for $h_n$ can be obtained using the raising operator as follows:
\begin{align}
    h_n(x,k_x) &= \frac{1}{2\pi} \int \braket{x+s/2}{n}\braket{n-1}{x-s/2}e^{-ik_xs}ds\nonumber\,,\\
             &= \frac{1}{\sqrt{2n}} \frac{1}{2\pi} \int \bra{x+s/2}\ket{n}\bra{n} \hat{a}^\dagger \ket{x-s/2}e^{-ik_xs}ds\nonumber\,,\\
             &= \frac{1}{\sqrt{2n}}g_n(x,k_x) * a^\dagger\,.
\end{align}
Here, $*$ is the Moyal star product on symbols \cite{dodin2024quasilinear} and $a^\dagger = x-ik$ is the symbol of the raising operator. Since $a^\dagger$ is a linear function of $(x,k_x)$ the $*$ product can be calculated explicitly as 
\begin{equation}
    g_n(x,k_x) * a^\dagger(x,k_x) = g_n(x,k_x)a^\dagger(x,k_x) + \frac{i}{2}\left\{ g_n(x,k_x),a^\dagger(x,k_x)\right\}\,,
\end{equation}
where $\{\cdot \}$ is the canonical Poisson bracket. It is easily seen then that
\begin{equation}
   h_n(x,k_x) =  \frac{1}{\sqrt{2n}}\left((x+\frac{1}{2}\partial_x) -i (k_x+\frac{1}{2}\partial_{k_x})\right)g_n(x,k_x)\,,
\end{equation}
which leads to the following formulas for the Wigner matrix components 

\begin{align}\label{WignerMatrixExact}
    W_{11}^n &= \frac{1}{2\omega_n}\frac{2n}{\omega_n-k_y^n}g_n(x,k_x)\,,\nonumber\\
    W_{12}^n = ({W_{12}^n})^* &= \frac{in}{2\omega_n}\frac{x-ik_x}{x^2+k^2}(g_n(x,k_x)+g_{n-1}(x,k_x))\,,\nonumber\\
    W_{22}^n &= \frac{1}{2\omega_n}(\omega_n-k_y^n)g_{n-1}(x,k_x)\,.
\end{align}

\begin{figure}
    \centering
    \includegraphics[width=0.7\linewidth]{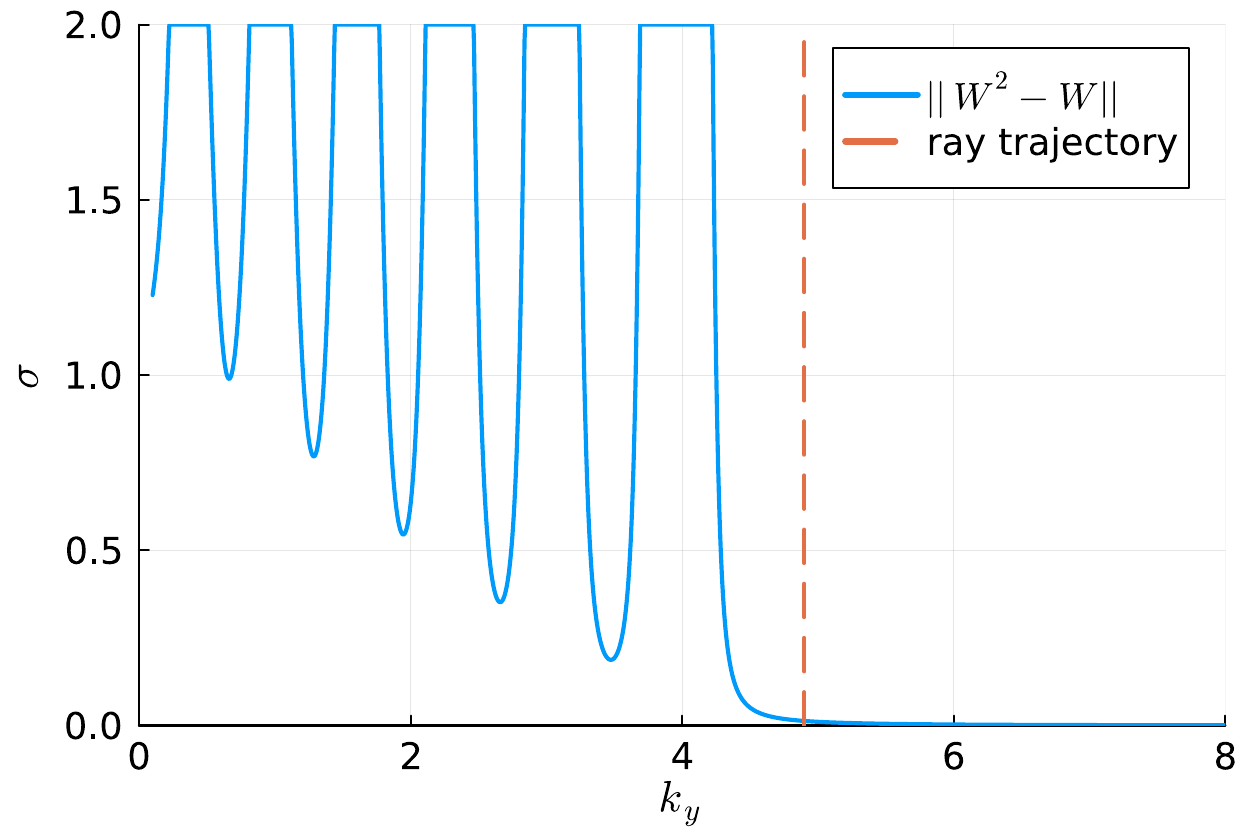}
    \caption{Plot of the error $\sigma$ \eqref{errorinW} for the $(W^{12})_{ij}$ Wigner matrix at $k_x=0$ and $\omega= 5$. The dashed orange line shows the intersection of this slice with the ray trajectory at $\omega=5$, which, for said $n=12$, corresponds to $k_y=\sqrt{24}$ by \eqref{Quantisedky}.}
    \label{fig:WignerError}
\end{figure}

\begin{rmrk}
    Given numerical numerical data for $\phi$ one can Fourier transform it in $y$ and then consider individual $k_y$ separately. This translates into individual $n$ by \eqref{Quantisedky}, so one can explore the Wigner matrices \eqref{DiracWigner} separately for separate $n$. For example, we show the error $\sigma$ determined by \eqref{errorinW}, for $n=12$ in Figure \ref{fig:WignerError}. For simplicity, we show only the slice corresponding to $k_x=0$. The GO region corresponds to the vicinity of the ray trajectory that satisfies the dispersion relation $\omega^2 = x^2+k_x^2+k_y^2$. Since the parameters for this figure are such that $\omega=5$ and $k_y=1$, zero $k_x$ corresponds to $x=\sqrt{24}$, a location marked with a red dotted line in Figure \ref{fig:WignerError}. One can see that the error $\sigma$ in this region is very small, indicating that using the eigenvector of $W$ with largest eigenvalue to estimate the local polarisation $\chi$, that is \eqref{chiestimate}, will be an accurate in this region. This is what we are expecting since that estimate followed from the GO approximation which is most accurate in the vicinity of the ray trajectory.
\end{rmrk}

We now demonstrate the general method to calculate $\text{Ch}_1$, for the $\omega>0$ band, discussed in Section \ref{Datageneral} using the Wigner matrix \eqref{WignerMatrixExact} as a model of experimental data. We will consider several different frequencies $\omega$ because it allows us to test if our $\text{Ch}_1$ results vary with the frequency. They shouldn't vary with $\omega$ but the GO approximation is more accurate at higher frequency and therefore the method of Section \ref{Datageneral} may be less reliable at low frequency. Specifically, we choose to consider the family of $\omega$'s displayed in Table \ref{tab:SyntheticWignerResults}. Note that only positive frequencies are considered since we are focusing on the $\omega>0$ band.

We need to select a sphere around the Weyl point, located at the origin in $(x,k_x,k_y)$ space, to sample our Wigner matrix over. Since, as per the remark above, the approximation $\eqref{chiestimate}$ is most accurate in the vicinity of the ray trajectory, we use the dispersion surfaces (defined by $\omega^2=x^2+k_x^2+k_y^2$) as the spheres.

Since the $k_y$ values are quantised at $\omega$, by \eqref{Quantisedky}, we cannot sample from every point on the sphere. Instead, for each $\omega$ there are a finite number\footnote{Specifically there are $2\lfloor{\omega^2/2}\rfloor$ circles to sample from.} of circular slices, indexed by $k_y^{(n)}$, of each dispersion sphere we can sample from. These circles are given by the equation $x^2+k_x^2=2|n|$ and from each of them we sample $10$ points. Note that the exact number of samples used here is arbitrary but this choice to always take the same number of samples from each circle implies that at higher $\omega$ our sphere will be sampled by more points since we have more circles on the dispersion sphere to sample from.

\begin{table}[t]
    \centering
    \begin{tabular}{c|c|c}
        $\omega$ & number of samples & $\text{Ch}_1$ \\\hline\hline
        2 & 20 & 0\\\hline
        2.5 & 40 & -1\\\hline 
        3 & 80 & -1\\\hline 
        4 & 140 & -1\\\hline 
        5 & 240 & -1\\\hline 
        6 & 340 & -1
    \end{tabular}
    \caption{Results of computing the first Chern number of the positive frequency band of \eqref{DiracEquation} from synthetic Wigner matrix measurements using the algorithm discussed in Section \ref{Section:DataAnalysis}. The expected correct value is $\text{Ch}_1=-1$, and this is reproduced by our algorithm consistently for high frequencies. }
    \label{tab:SyntheticWignerResults}
\end{table}

 From this sample of points discussed above we generated a simplicial mesh $K$, with vertices located at points we refer to as $(x^i,k_x^i,k_y^i)$ for $\simp{i}\in K$. At each point $(x^i,k_x^i,k_y^i)$, we calculated the Wigner matrix elements \eqref{WignerMatrixExact} and then computed the normalised eigenvector with largest eigenvalue, as per \eqref{chiestimate}, producing a vector we call $\chi_\simp{i}$. These vectors were assembled into an array of Stiefel matrices $\Phi = \{\Phi_\simp{i} = [\gamma(\chi_\simp{i}),\gamma(J\chi_\simp{i})]\}_{\simp{i}\in K^0}$ defining a local trivialisation of a  discrete vector bundle over $K$. Performing the same procedure on this $\Phi$ as we presented on the local trivialisations used in Section \ref{Section:NumericalExperiments} we can calculate the Chern number of this bundle. The result of computing the Chern number of $\Phi$ for each $\omega$ considered is presented in Table \ref{tab:SyntheticWignerResults}. We see that for $\omega>2$ we consistently obtained the correct result of $\text{Ch}_1=-1$ and hence confirm - in a synthetic experiment - that we can extract the Chern number of a local polarisation bundle around a Weyl point. 

Observe that the $\text{Ch}_1$ calculation failed for the $\omega=2$ test, which was the smallest value of $\omega$ tested. This was expected for two reasons. Firstly, the GO approximation upon which this entire approach is based is most accurate at high frequency and so we should expect the Wigner matrix to less well approximate the projector onto the local polarisations at smaller $\omega$. Secondly, at $\omega=2$ only the $n=\pm1$ modes are excited and so we only have a very small number of samples $20$ from which to construct the bundle. We have found empirically that our algorithm requires more that $\approx 50$ samples around a sphere to return consistently return the correct $\text{Ch}_1$ value and so should not expect $20$ to be sufficient. This is however only a heuristic result and may not always apply.


For a family of $\omega$'s displayed in Table \ref{tab:SyntheticWignerResults} we generated samples of $(x,k_x,k_y)$ spheres as described above. Specifically, from each circle $x^2+k_x^2=2n$ we took 20 samples. At each of these points in $(x,k_x,\lambda)$ we evaluated the eigenvector $\mathbf{e}_1$ of the absolute largest eigenvector of $W^n(x,k_x,k_y^n)$ and assembled a discrete eigenbundle from them. We then computed $\text{Ch}_1$ for each of these eigenbundles. The results of this computation are reported in the aforementioned table. We see that for $\omega>2$ we consistently obtained the correct result of $\text{Ch}_1=-1$ and hence confirmed that in a synthetic experiment that we can extract the index of a Weyl symbol monopole from an experimentally measured field. 

Note that the $\text{Ch}_1$ calculation failed for the $\omega=2$ test, which was the smallest value of $\omega$ tested. This was expected for two reasons. Firstly, the GO approximation upon which this entire approach is based is most accurate at high frequency and so we should expect the Wigner matrix to less well approximate the projector onto the eigenvectors of the Weyl symbol at smaller $\omega$. Secondly, at $\omega=2$ only the $n=\pm1$ modes are excited and so we only have a very small number of samples $20$ from which to construct the eigenvector. We have found experimentally that our algorithm requires more that $\approx 20$ samples around a sphere to return consistently return the correct $\text{Ch}_1$ value. This is however only a heuristic result and may not always apply.

\section{Summary}\label{Section:Conclusion}

In summary, we have presented an algorithm for counting the expected number of TIMs in a linear wave problem using simplicial characteristic classes. The algorithm in question does this by computing the first Chern number of a complex line bundle via the construction of a discrete approximate vector bundle. We demonstrated that our approach accurately computes the Chern number of the tautological complex line bundle and accurately predicts the number of boundary modes in the SW equations and as well as the TLCW. We also constructed an approach to extracting the Chern number from experimentally measured wave fields using Wigner transforms, and tested this method synthetic data derived from the 2D Dirac equation.

\section*{Acknowledgements}
We would like to thank L. Scocolla and V. Robins for helpful discussions. 
\section*{Funding}
This work was supported by the U.S. DOE through Contract No. DE-AC02-09CH11466.

\appendix
\section{Simplicial (co)homology review}\label{AppendixA:TDA}

Here we present a brief review of simplicial (co)homology and describe the notation we adopt in our use of this mathematical toolbox. Our review is intentionally brief as detailed discussions of the relevant mathematics can be found in the texts of Ghrist \cite{ghrist2018homological,ghrist2014elementary}, Edelsbrunner \cite{edelsbrunner2010computational}, and Hatcher \cite{Hatcher}. 

\subsection{Simplicial homology}

The key idea of simplicial (co)homology is to translate questions about the global topology of a simplicial mesh $K$ into questions about linear algebra over a chosen field $\F$. $\Z_p$ (the integers modulo a prime number $p$) and $\R$ are standard choices for $\F$. To do this we define a set of vector spaces $C_p$ using the $p$-simplices of $K$ as a basis and call elements of these vectors \textit{chains}. So a $p$-chain $\alpha$ is a formal sum of the form 
\begin{equation*}
    \alpha = \sum_{i_1<i_2<\ldots <i_p\in K^0}\alpha_{i_1\ldots i_p} \, (i_1i_2\ldots i_p)\,,
\end{equation*}
where the $\alpha_{i_1\ldots i_p}$'s indicate the $\F$ valued coefficients. Note that, it is standard to refer to $C_p$ as the $p$'th \textit{chain group} and to $p$ as its \textit{grading}. Since the $p$-simplices of $K$ are the defining basis, and they have a definite order given by the lexicographical ordering, we can also think of this chain as a column vector 
\begin{equation*}
    \alpha = \begin{pmatrix}
        \alpha_1\\\vdots\\\alpha_{N_p}
    \end{pmatrix}\,,
\end{equation*}
where $N_p$ refers the number of p-simplices in $K$.

\begin{rmrk}
In the above, we introduced the coefficients $\alpha_{i_1\ldots i_p}$ with the assumption that they belong to a field $\F$. This is not strictly necessary as we can relax this requirement to allow to the coefficients to be taken from any ring, such as $\Z$. For simplicity, we will focus on the field case here but will note the distinction when relevant. 
\end{rmrk}

 The topology of the mesh $K$ can be encoded algebraically into linear maps, called the boundary maps $B_p:C_p\rightarrow C_{p-1}$, which are defined to take each $p$-simplex to the formal sum of its topological boundary:
\begin{equation}\label{BoundaryMatrix}
    B_p(i_1\ldots i_p) = \sum_{j=1}^p (-1)^j(i_1\ldots \hat{i}_j \ldots i_p)\,,
\end{equation}
where the $\hat{i}_j$ indicates removing the index $i_j$ from the ordered list specifying the simplex. 

We now have a sequence of vector spaces connected by linear maps (called a chain complex) of the form
\begin{equation*}
    \cdots \overset{B_{p+2}}{\longrightarrow} C_{p+1} \overset{B_{p+1}}{\longrightarrow} C_p \overset{B_p}{\longrightarrow} C_{p-1}  \overset{B_{p-1}}{\longrightarrow}\cdots\,.
\end{equation*}
In the column vector language from before each $B_p$ is a matrix of size $N_{p-1}\times N_p$ with $\F$ valued entries, for which \eqref{BoundaryMatrix} specifies the columns. 

The core idea of homology is that topological features of spaces, specifically the \textit{number of holes of dimension $p$}, are encoded in the null vectors of $B_p$, which are called \textit{cycles}. However, there are many trivial cycles which need to be removed for this to be a useful topological invariant. These trivial cycles occur because the boundary matrices satisfy the relation $B_{p}B_{p+1}=0$ for all $p$. Therefore, if a vector $\alpha\in C_p$ is the \textit{boundary} of a vector $\beta\in C_{p+1}$ it will be a null vector $B_p\alpha = B_pB_{p+1}\beta = 0$. These trivial null vectors can be removed by introducing a sequence of quotient vector spaces, called the \textit{homology}, of the simplicial complex
\begin{equation}
    H_p(K,\F) = \frac{\ker B_p}{\Im B_{p+1}}\,.
\end{equation}
Since $H_p$ is a quotient space, each vector in $H_p$ is an equivalence class of vectors in $\ker B_p\subset C_p$, where the equivalence relation is defined as \textit{vector equality modulo boundaries}. By which we mean that if $\alpha,\beta\in C_p$ then $\alpha \sim \beta$ iff $\alpha = \beta + B_{p+1}\gamma$ for some vector $\gamma \in C_{p+1}$. 

Usually we refer to homology classes with square brackets $[\alpha]\in H_*$ where $\alpha\in C_*$ refers to a representative simplicial chain. Note that we treat this convention as a soft rule and will abuse it when there is no risk of confusion by dropping the brackets.

\subsection{The fundamental class}

Geometrically, $H_p$ counts \textit{the number of holes} in the complex in dimension $p$. In the sense that its vector space dimension encodes this count. For example, for a mesh $K$ triangulating a compact two dimensional surface, which is the main concern of this paper, we know that it encloses a three dimensional void. This void is interpreted as a two dimensional hole and as such $H_2(K,\F) = \F$ so that the dimension of the vector space (1) equals to the number of 2D holes (1). Note that, in this case, the basis vector generating this single copy of $\F$ is called \textit{the fundamental class} $[\mu_K] \in H_2(K,\F)$, and it plays an important role in defining consistent orientations on the surface of $K$.

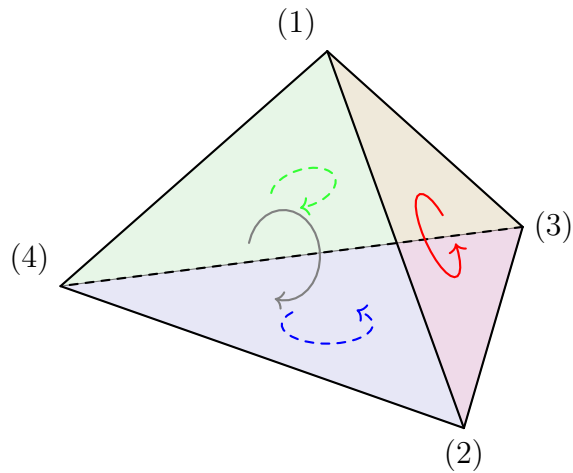
\begin{wrapfigure}{r}{0.5\textwidth}
    \begin{tikzpicture}[line cap=round, line join=round>=stealth,scale=4,3d view={130}{25}]

    \coordinate (y1) at (0,0,1);
	\coordinate (y2) at (0.333,0.866,0);
	\coordinate (y3) at (-1,0,0);
    \coordinate (y4) at (0.333,-0.866,0);

	\path[draw=black, fill = red!40, opacity = 0.3] (y1) -- (y2) -- (y3) ;

    \path[draw=black, fill = gray!30, opacity = 0.3] (y1) -- (y2) -- (y4) ;

    \path[draw=black, fill = green!20, opacity = 0.3] (y1) -- (y3) -- (y4) ;

    \path[draw=black, fill = blue!20, opacity = 0.3] (y2) -- (y3) -- (y4) ;
    
    \draw [thick, black] (y1) -- (y2);
    \draw [thick, black] (y1) -- (y3);
    \draw [thick, black] (y2) -- (y3);
    \draw [thick, black] (y1) -- (y4);
    \draw [thick, black] (y2) -- (y4);
    \draw [thick, black,dashed] (y3) -- (y4);

    \node[above left] (i) at (y1) {$\simp{ 1}$};
    \node[below] (j) at (y2) {$\simp{ 2}$};
    \node[right] (k) at (y3) {$\simp{ 3}$};
    \node[above left] (w) at (y4) {$\simp{ 4}$};

    \draw [blue,thick,dashed,domain=180:450,->] plot (
    {0+0.15*sin(-\x)}, 
    {0+0.15*cos(-\x)},
    {0});

    \draw [red,thick,domain=180:500,->] plot (
    {-0.222+ 0.15*( 0.292*sin(-\x)+0.828*cos(-\x)}, 
    { 0.289+ 0.15*(-0.379*sin(-\x)+0.559*cos(-\x)},
    { 0.333+ 0.15*( 0.877*sin(-\x)-0.033*cos(-\x)});

    \draw [green!80,thick,dashed,domain=180:450,->] plot (
    {-0.222+ 0.11*( 0.293*sin(-\x)-0.829*cos(-\x)}, 
    {-0.289+ 0.11*( 0.380*sin(-\x)+0.560*cos(-\x)},
    { 0.333+ 0.11*( 0.878*sin(-\x)+0.033*cos(-\x)});

    \draw [gray,thick,domain=180:450,->] plot (
    { 0.222+ 0.15*(-0.316*sin(-\x)+0.000*cos(-\x)}, 
    { 0.000+ 0.15*(-0.000*sin(-\x)+1.000*cos(-\x)},
    { 0.333+ 0.15*( 0.949*sin(-\x)-0.000*cos(-\x)});

    \end{tikzpicture}
    \caption{\small{Example surface mesh for a tetrahedron. The orientation of each face is indicated by an oriented loop whose color is coded to the colour of the associated face.}}
    \label{fig:tetrahedron}
\end{wrapfigure}

Given the matrix representation for $B_2$ discussed above, calculating a representative $\mu_K$ for the fundamental class is straightforward. A mesh $K$, triangulating such a surface, is comprised of only $0$,$1$, and $2$-simplices, so $C_p=0$ for $p\geq3$ and therefore the chain complex simplifies to
\begin{equation*}
       0 \overset{B_{3}}{\longrightarrow}C_2 \overset{B_2}{\longrightarrow} C_1 \overset{B_1}{\longrightarrow} C_0  \overset{B_0}{\longrightarrow}0\,,
\end{equation*}
where $0$ refers to the zero dimensional vector space. Since there are no $3$-simplices in $K$, we know that $\Im B_3 =0$ and so $H_2(K,\F) = \ker{B_2} = \F$. Therefore, $B_2$ must have a one dimensional null space the elements of which belong to the fundamental class $[\mu_K]$. So, finding  $\mu_K$ requires only finding a basis for the 1 dimensional null space of the matrix $B_2$, which can just be done with gaussian elimination\footnote{Although it is more common in practice to instead compute the \textit{smith normal form} of $B_2$ since this can be done with coefficients taken from any ring rather than just any field \cite{munkres2018elements}.}.

To see the role that $\mu_K$ plays in orienting the surface, consider the case of the surface of a tetrahedron as shown in Figure \ref{fig:tetrahedron}. This complex has six edges $\{(12),(13),(14),(23),(24),(34)\}$ and four faces $\{(123),(124),(134),(234)\}$ as the basis for its $C_1$ and $C_2$ chain groups. Note that the assumed orientations of  faces are indicated in Figure \ref{fig:tetrahedron} by the small curved arrow drawn onto them. Applying the right hand rule to these arrows implies that two of the faces are inwards pointing and two are outwards pointing. Therefore, the tetrahedron is not consistently oriented. 

The fundamental class contains the information to construct a consistent orientation. To see this, consider the $B_2$ boundary matrix written with respect to the bases defined above. As a matrix this is 
\begin{equation}
    B_2 = \begin{pmatrix}
        1 & 1 & 0 & 0\\
        -1 & 0 & 1 & 0\\
        0 & -1 & -1 & 0\\
        1 & 0 & 0 & 1\\
        0 & 1 & 0 & -1\\
        0 & 0 & 1 & 1
    \end{pmatrix}\,,
\end{equation}
and we can read off the null vector, and hence representative of the fundamental class, as 
\begin{equation}
    \mu_K = \begin{pmatrix}
        1 \\ -1 \\ 1 \\ -1
    \end{pmatrix} = (123)-(124)+(134)-(234)\,.
\end{equation}
Observe that in $\mu_K$, the two basis chains with $-1$ coefficients are $(124)$ and $(234)$ and that if we invert the orientation on these two faces in Figure \ref{fig:tetrahedron} then the surface will have a consistent outwards pointing orientation. This observation generalises to imply that the representative of the fundamental class specifies a global orientation on a simplicial mesh of a surface.  

\subsection{Simplicial cohomology}

Simplicial homology comes paired with an algebraic dual theory called \textit{cohomology}. This dual theory can be constructed by considering the sequence of dual vector spaces of the above vector spaces of chains, which we refer to as $C^p = (C_p)^*$. We call the elements of this vector space simplicial \textit{cochains} and always work in the natural dual basis of the fundamental simplicial basis defined above, where dual elements are notated with asterisks. So, for example, we use the symbol $(i)^*$ to refer to the 0-cochain dual to $(i)$,. We can then represent cochains as column vectors in this basis similarly to what we do with chains. 

We think of chain and cochains as equipped with a billinear \textit{pairing} $\langle\cdot,\cdot\rangle:C^p\times C_p\rightarrow\F$ which encodes the duality between them. Specifically, a basis $p$-chain when \textit{paired} with its own dual will return $1\in \F$ and otherwise will return zero. So, for example, we use the symbol $(i)^*$ to refer to the 0-cochain dual to $(i)$, which is defined to satisfy $\pair{(i)^*}{(j)} = \delta_{ij}$. This extends linearly over all chains and cochains in each grading $p$. If we represent chains and cochains by vectors $\alpha\in C_p$ and $\beta\in C^p$ respectively, as discussed above, then this pairing just reduces to the vector product

\begin{equation}
    \pair{\beta}{\alpha} = \beta^\intercal\alpha\,.
\end{equation}

We can equivalently think of a $p$-cochain as an assignment of value in $\F$ to each $p$-chain in $C_p$. For example, suppose we assign a value $\gamma_\simp{i}\in \F$ to each $\simp{i}\in K^0$. Then we can define a $0$-cochain $\gamma\in C^0(K,\F)$ by requiring that its pairing with each $\simp{i}$ reproduces $\gamma_{\simp{i}}$. That is, we define
\begin{equation}
    \pair{\gamma}{\simp{i}} = \gamma_{\simp{i}}\,,
\end{equation}
and then extend linearly over chains. This is a one-to-one correspondence and it how we encode cochains numerically Section \ref{Section:EulerClass}. So, for example, in the algorithm \verb!eu!, which calculates a cochain $e\in C^2(K,\Z)$, representing the Euler class of a vector bundle, $e$ is stored as an array $e = \{e_{\simp{ijk}}\}_{\simp{ijk}\in K^2}$ assigning an integer $e_\simp{ijk}$ to each $2$-simplex in the mesh.

The cochain vector spaces are then connected to each other by the transpose of the boundary maps, which for triangulated surface results in the sequence
\begin{equation*}
    0 \overset{B_{3}^\intercal}{\longleftarrow}C^2 \overset{B_2^\intercal}{\longleftarrow} C^1 \overset{B_1^\intercal}{\longleftarrow} C^0  \overset{B_0^\intercal}{\longleftarrow}0\,.
\end{equation*}
Note that we refer to $B_p^\intercal:C_{p-1}\rightarrow C_p$ as the \textit{coboundary} matrices and that, like with the boundary matrix, it composes with itself to give zero $B_{p+1}^\intercal B_{p}^\intercal = (B_{p}B_{p+1})^\intercal = 0$. Importantly, the boundary and coboundary matrices are adjoint with respect to the pairing $\pair{\cdot}{\cdot}$. That is, if $\alpha\in C_{p+1}$ and $\beta\in C^{p}$ then
\begin{equation}
    \pair{\beta}{B_{p+1}\alpha} = \beta^\intercal B_{p+1}\alpha = (B_{p+1}^\intercal \beta)^\intercal \alpha = \pair{B_{p+1}^\intercal \beta}{\alpha}\,.
\end{equation}

Similarly to homology, in cohomology topological features can be encoded in cochains which vanish under $B_p^\intercal$, called \textit{cocycles}, but which are not the coboundary of another cochain. We therefore define quotient vector spaces 
\begin{equation}
    H^p(K,\F) = \frac{\ker B_{p+1}^\intercal}{\Im B_{p}^\intercal}\,,
\end{equation}
called the simplicial \textit{cohomology} of $K$. As before we refer to cohomology classes with square brackets $[c]\in H^*$ where $c\in C^*$ refers to a representative simplicial cochain. Again, we treat this as a soft convention. 

\begin{rmrk}
    It turns out that the simplicial homology and cohomology are isomorphic in each grading $H_p = H^p$ but cohomology carries extra algebraic structure. Specifically, cohomology classes can be multiplied using an operation called the cup product, notated as $[\alpha]\smile[\beta]$ for $[\alpha],[\beta]\in H^*$. The details of how this product is constructed are not relevant for us here but can be found in standard sources on algebraic topology such as the text by Hatcher \cite{Hatcher}. We mention the cup product because it appears in the definition of the Stiefel--Whitney and Euler classes presented in Section \ref{subsec:SWAndEuler}.
\end{rmrk} 

Another note worthy of mention is that the pairing $\pair{\cdot}{\cdot}$ discussed above lifts to a pairing between homology and cohomology classes. If $[\alpha]\in H_p$ and $[\beta]\in H^p$ then this pairing is defined by
\begin{equation}
    \pair{[\beta]}{[\alpha]} = \pair{\beta}{\alpha} = \beta^T\alpha\,.
\end{equation}
The above is well defined because it does not depend upon the representative chosen. To see this suppose that we had chosen to different representatives $\alpha' = \alpha + B_{p+1}\gamma$ and $\beta' = \beta + B_p^\intercal\delta$, then it follows that
\begin{equation*}
    \pair{[\beta']}{[\alpha']} = (\beta+B_{p}^\intercal\delta)^\intercal(\alpha+B_{p+1}\gamma) =\beta^\intercal\alpha + \delta^\intercal B_p\alpha + (B_{p+1}^\intercal\beta)^\intercal\gamma + \delta^\intercal B_{p}B_{p+1}\gamma\,,
\end{equation*}
but only the first term in the above remains since $\alpha$ is a cycle and $\beta$ is a cocycle and the boundary matrices compose to zero. Therefore 
\begin{equation*}
    \pair{[\beta']}{[\alpha']} =\beta^\intercal\alpha  = \pair{[\beta]}{[\alpha]}\,,
\end{equation*}
and hence this pairing is independent of the representative.

\printbibliography

\end{document}